\title{Phase Retrieval via\\Incremental Truncated Wirtinger Flow}
\author{
 Ritesh Kolte\\
  Department of Electrical Engineering\\
  Stanford University\\
  Stanford, CA 94305 USA\\
  \texttt{ritesh.kolte@gmail.com} \\
  %% examples of more authors
   \And
  Ayfer \"{O}zg\"{u}r \\
  Department of Electrical Engineering \\
   Stanford University \\
   Stanford, CA 94305 USA\\
   \texttt{aozgur@stanford.edu} \\
  %% \AND
  %% Coauthor \\
  %% Affiliation \\
  %% Address \\
  %% \texttt{email} \\
  %% \And
  %% Coauthor \\
  %% Affiliation \\
  %% Address \\
  %% \texttt{email} \\
  %% \And
  %% Coauthor \\
  %% Affiliation \\
  %% Address \\
  %% \texttt{email} \\
}
\DeclareMathOperator*{\minimize}{minimize}
\newcommand{\lb}{\text{lb}}
\newcommand{\ub}{\text{ub}}
\newcommand{\SNR}{\text{SNR}}
\newcommand{\dist}{\text{dist}}
\newcommand{\init}{\text{init}}
\newtheorem{lemma}{Lemma}
\newtheorem{theorem}{Theorem}
\newtheorem{proposition}{Proposition}
\begin{document}

\maketitle

\begin{abstract}
In the phase retrieval problem, an unknown vector is to be recovered given  quadratic measurements. This problem has received considerable attention in recent times. In this paper, we present an algorithm to solve a nonconvex formulation of the phase retrieval problem, that we call \emph{Incremental Truncated Wirtinger Flow}. Given random Gaussian sensing vectors, we prove that it converges linearly to the solution, with an optimal sample complexity. We also provide stability guarantees of the algorithm under noisy measurements. Performance and comparisons with existing algorithms are illustrated via numerical experiments on simulated and real data, with both random and structured sensing vectors. 
\end{abstract}

\section{Introduction}

Let $\bm{x}\in\mathbb{C}^n$ be an arbitrary unknown vector. We consider the problem of recovering $\bm{x}$, given data which is $m$ quadratic measurements \begin{equation}\label{eq:model_noiseless}y_i = |\bm{a}_i^*\bm{x}|^2,\quad 1\leq i\leq m,\end{equation}  where the sensing vectors $\bm{a}_i\in\mathbb{C}^n$ are known. The problem of solving this quadratic set of equations is quite general: special cases include the phase retrieval problem which arises in many imaging applications due to physical limitations of sensors, as well as the NP-complete stone problem in combinatorial optimization. 

Despite the hardness, impressive theoretical and empirical performance guarantees have been obtained recently by making some assumptions about the model, such as assuming the sensing vectors to be i.i.d. samples from, say $\mathcal{N}(0,I)$ or $\mathcal{CN}(0,I)$. The first work in this direction was \cite{CanStrVor13}, which employed the squared loss function and attempted to solve
\begin{equation}\label{eq:optim_prob}\minimize\limits_{\bm{z}\in\mathbb{R}^n} \frac{1}{m}\sum_{i=1}^m\ell(y_i,|\bm{a}_i^*\bm{z}|^2),\end{equation}
where $\ell(y_i,|\bm{a}_i^*\bm{z}|^2) = (y_i - |\bm{a}_i^*\bm{z}|^2)^2$. The non-convexity of this problem was addressed by performing a lifting step (expressing the problem in terms of $\bm{z}\bm{z}^T$) followed by a convex relaxation. This algorithm recovers $\bm{x}$ with high probability if $m=O(n)$, \cite{Can14}. However, the memory requirements and computational complexity make it prohibitive for problems of large dimension. 

Follow-up works such as \cite{CanXiaSol15}, \cite{NetJaiSan15}, \cite{CheCan15} made progress on the computational complexity front. All of these works developed algorithms to solve the problem \eqref{eq:optim_prob} directly without performing the lifting step, using a first-order optimization method for iterative refinement, after an appropriate initialization. The work \cite{CheCan15} also provided stability guarantees if the measurements are corrupted with noise. 

However, each iteration in these algorithms requires one pass through the entire data. This can be highly undesirable when the dimensions of the problem are large, since a single update can require a large amount of time, in part due to the communication delays introduced when the entire data does not fit in the available memory. Large dimensions naturally arise in the phase retrieval problem since the object of interest usually represents an image, so $n$ is the product of the image dimensions.

In this paper, we build on the idea of Truncated Wirtinger Flow (TWF) from \cite{CheCan15} and modify it to obtain the Incremental Truncated Wirtinger Flow (ITWF). By \emph{incremental}, we mean that each iteration of the algorithm only accesses one randomly chosen data point, i.e. one sensing vector and the corresponding measurement. Thus, each iteration of ITWF is cheaper than that of TWF by a factor $m$, similar to what happens, e.g., by going from full gradient descent to stochastic gradient descent (SGD) in the case of standard empirical risk minimization problems. Unfortunately, this benefit is not obtained readily for the problem at hand, since the truncation performed at each iteration in \cite{CheCan15} makes use of a threshold that is a function of all the sensing vectors and measurements. Excluding the truncation leads to a severe hit in the performance as observed in \cite{CheCan15}. Furthermore, similar to SGD, sampling one data point instead of all data points introduces variance in the descent direction. The main contribution of this paper is the design of the incremental method ITWF that matches the excellent performance of TWF in terms of the statistical complexity, computational complexity and robustness to noisy measurements despite being an incremental method. In fact, our numerical experiments demonstrate that ITWF far surpasses TWF on the computational complexity front. 

\begin{table*}[!ht]
\centering
\begin{tabular}{c | c | c  }
\hline
Algorithm & Sample complexity $m$ & Computational complexity (stage II)\\
\hline 
AltMinPhase \cite{NetJaiSan15} & $O(n\log^3n)$ & $O(n^2\log^3(n/\epsilon))$ \\ 
\hline
Wirtinger~Flow \cite{CanXiaSol15} & $O(n\log n)$ & $O\left(n^3\log(n/\epsilon)\right)$\\
\hline
Truncated Wirtinger Flow \cite{CheCan15} & $O(n)$  & $O\left(n^2\log(1/\epsilon)\right)$\\
\hline
This paper & $O(n)$ & $O\left(n^2\log(1/\epsilon)\right)$\\
\hline
\end{tabular}
\caption{Performance of Algorithms}
\end{table*}

Table 1 provides a comparison of algorithms. 
%Note that while the sample complexity of ITWF includes an additional logarithmic factor, we find that it does not have any noticeable effect in the numerical experiments (see Example~1 in Section~\ref{sec:examples}), as long as the initialization step is appropriate \`{a} la TWF. 
Though our original intention was to simply develop an incremental version of TWF to allow efficient handling of large data, we find that ITWF provides a remarkable speedup as compared to TWF, even at $n\approx 1000$. To provide the reader an idea about the speed-up, we mention here some observations from the numerical experiments in Section~\ref{sec:examples}. In Example~2, after an initialization stage requiring 10 passes through the data, the second stage of TWF requires at least 120 further passes to get a high accuracy solution. In contrast, the second stage of ITWF recovers a high accuracy solution in less than 15 passes. Example~3 presents an even more compelling case. Here, after 50 passes through the data for initialization, the solution returned by TWF after 50 further passes can instead be obtained by making 3 passes using ITWF.

\subsection{Related Work}
There has also been a lot of recent work on different formulations of the phase retrieval problem. In the sparse phase retrieval problem, it is additionally assumed that the vector $x$ has only a few non-zero entries, and algorithms are sought for which the sample complexity and computational complexity have optimal dependence on the number of non-zero entries. The interested reader is referred to works such as \cite{NetJaiSan15}, \cite{Jag13}, \cite{PedLeeRam14}. Phase retrieval under the assumption of structured sensing vectors has also been a topic of interest, e.g. coded diffraction patterns \cite{Can15}, STFT \cite{Jag15}. We also present numerical experiments in this paper demonstrating the performance of ITWF when the sensing vectors are structured. Recent developments on different tractable formulations of the phase retrieval problem have been compiled in the survey article \cite{JagEldHas15}.   
Finally, while we focus on devising an incremental update for the iterative refinement stage of TWF, the initialization stage can also be made incremental in a straightforward manner by using incremental algorithms for PCA, such as the algorithm from \cite{Sha15}.
 
\section{Main Idea}

In the following section, we first describe the algorithm TWF from \cite{CheCan15}, and then describe the proposed algorithm ITWF. 

Since we can only hope to recover the solution upto a global phase, we define $\dist(\bm{z},\bm{x})$ to be
$\min_{\varphi\in[0,2\pi)}\|e^{-j\varphi}\bm{z} -\bm{x}\|,$ and we implicitly assume in the remainder of the paper that $\bm{z}$ is $e^{-j\varphi(\bm{z})}\bm{z}$, where $\varphi(\bm{z})$ is the argument of the above minimization problem. Then, we use $\bm{h}$ to denote $\bm{z} - \bm{x}.$ Thus, $\|\bm{h}\| = \dist(\bm{z},\bm{x}).$

\subsection{Truncated Wirtinger Flow}
The loss function used in \cite{CheCan15} is $\ell(y,\hat{y}) = \hat{y} - y\log \hat{y},$ which would correspond to maximizing the log likelihood if the measurements were assumed to arise from a Poisson noise model. 
%The truncated gradient descent method presented in \cite{CheCan15} aims to recover $x$ by solving 
%\begin{equation}\label{eq:poisson_optim}\minimize\limits_{\bm{z}\in\mathbb{R}^n} \frac{1}{m}\sum_{i=1}^m\ell(y_i,|\bm{a}_i^*\bm{z}^{(t)}|^2).\end{equation}
The TWF algorithm has two stages described below, and requires choosing the constants $\alpha_y$, $\mu$, $\alpha_z^{\lb}$, $\alpha_z^{\ub}$, $\alpha_h$. %For brevity of this section, we do not mention the additional tweaks presented in \cite{CheCan15} that allow for sharpening of the theoretical results and application of the algorithm to non-Gaussian sensing vectors. 

(1) Truncated Spectral Initialization\\
Set $\bm{z}^{(0)}$ to be the principal eigenvector (appropriately scaled) of 
%\begin{equation}\label{eq:spectral}
$$\frac{1}{m}\sum_{i=1}^m y_i \bm{a}_i\bm{a}_i^*\mathbbm{1}_{\left\{y_i\leq \alpha^2_y\left(\frac{1}{m}\sum_{i=1}^m y_i\right)\right\}},$$%\end{equation}
with $\alpha_y$ set to, say 3. The point $\bm{z}^{(0)}$ satisfies \begin{equation}\label{eq:spectral_guarantee}\dist(\bm{z}^{(0)},\bm{x}) \leq \delta \|\bm{x}\|,\end{equation} for any $\delta >0$, as long as $m/n$ exceeds a sufficiently large constant. For intuition and proof of this fact, we refer the reader to \cite{CheCan15}.
%The intuition behind this step is that the principal eigenvector of the expectation of $y_i\bm{a}_i\bm{a}_i^*$ is equal to the desired solution $\bm{x}$. So if $m$ is sufficiently large, the principal eigenvector of $\frac{1}{m}\sum_{i=1}^m y_i \bm{a}_i\bm{a}_i^*$ would be close to $\bm{x}$, thus providing a suitable initialization. Including the additional indicator term does not change the principal eigenvector of the expectation, however it is crucial for getting the suitable initialization with only $m=O(n)$ measurements due to the heavy-tailed nature of the distribution of $y_i\bm{a}_i\bm{a}_i^*$. In the absence of the indicator term, a suitable initialization would require at least $m=O(n\log n)$ measurements.

(2) Truncated Wirtinger Flow\\
For $t=0,1,\dots, T-1$, perform the update
\begin{equation}\label{eq:trunc_grad}\bm{z}^{(t+1)} = \bm{z}^{(t)} - \frac{\mu}{m}\sum_{i=1}^m\nabla \ell(y_i,|\bm{a}_i^*\bm{z}^{(t)}|^2) \cdot \mathbbm{1}_{\mathcal{E}^i_{1,t} \cap \mathcal{E}^i_{2,t}}, \end{equation}
where $\nabla \ell(y_i,|\bm{a}_i^*\bm{z}^{(t)}|^2)$ denotes the Wirtinger derivative of with respect to $\bm{z}$:
$$\nabla \ell(y_i,|\bm{a}_i^*\bm{z}^{(t)}|^2) = \frac{2|\bm{a}_i^*\bm{z}^{(t)}|^2 - 2y_i}{\bm{z}^{(t)*}\bm{a}_i}\bm{a}_i,$$ and the events $\mathcal{E}^i_{1,t}$ and $\mathcal{E}^i_{2,t}$ are defined as follows,
$$\mathcal{E}^i_{1,t} := \left\{ \alpha_z^{\lb} \leq \frac{|\bm{a}_i^*\bm{z}^{(t)}|}{\|\bm{z}^{(t)}\|} \leq \alpha_z^{\ub} \right\} ,$$
\begin{equation}\label{eq:E2}\mathcal{E}^i_{2,t} := \left\{ \left|y_i - |\bm{a}_i^*\bm{z}^{(t)}|^2\right| \leq \frac{\alpha_h}{m}\sum_{i=1}^m \left|y_i - |\bm{a}_i^*\bm{z}^{(t)}|^2\right|\right\}.\end{equation}
The idea is to perform an update similar to gradient descent. %, since the Wirtinger gradient of $|\bm{a}_i^*\bm{z}|^2 - y_i \log(|\bm{a}_i^*\bm{z}|^2)$ is equal to $\frac{2|\bm{a}_i^*\bm{z}^{(t)}|^2 - 2y_i}{\bm{z}^{(t)*}\bm{a}_i}\bm{a}_i$.
The truncation results in dropping those indices for which the numerator and denominator magnitudes are very different from their expected values respectively, %, where the unknown expected value of the magnitude of numerator is estimated by the empirical mean. 
since such terms can exert a large atypical influence causing the full gradient to point in an undesirable direction.%, which would necessitate a small step-size $O\left(\frac{1}{n}\right)$ to guarantee convergence.
%\end{itemize}

\subsection{Incremental Truncated Wirtinger Flow}

The straightforward way of turning the above algorithm into an incremental one would be to replace \eqref{eq:trunc_grad} by:
\begin{equation}\label{eq:inc_trunc_grad_first}\bm{z}^{(t+1)} = \bm{z}^{(t)} - \mu  \nabla\ell(y_{i_t},|\bm{a}_{i_t}^*\bm{z}^{(t)}|^2)\cdot \mathbbm{1}_{\mathcal{E}^{i_t}_{1,t} \cap \mathcal{E}^{i_t}_{2,t}}, \end{equation}
where $i_t$ would be chosen uniformly at random from $\{1,2,\dots, m\}$. However, as can be seen from \eqref{eq:E2}, checking if $\mathcal{E}^i_{2,t}$ has occurred for any $i$ requires a full pass through the entire data. Thus,  \eqref{eq:inc_trunc_grad_first} is as costly as \eqref{eq:trunc_grad}. To address this difficulty, we replace $\mathcal{E}^i_{2,t}$ by the event $\mathcal{E}^i_{3}$, which is defined as $$\mathcal{E}^i_{3} :=   \left\{ \frac{y_i}{\frac{1}{m}\sum_{i=1}^m y_i } \leq (\alpha_x)^2 \right\},$$ where $\alpha_x$ is a constant to be chosen. This is amenable for an incremental update, which we call Incremental Truncated Wirtinger Flow (ITWF):\begin{equation}\label{eq:inc_trunc_grad}\bm{z}^{(t+1)} = \bm{z}^{(t)} - \mu \nabla\ell(y_{i_t},|\bm{a}_{i_t}^*\bm{z}^{(t)}|^2)\cdot \mathbbm{1}_{\mathcal{E}^{i_t}_{1,t} \cap \mathcal{E}^{i_t}_{3}}. \end{equation}
Note in fact that not only does $\mathcal{E}^i_{3}$ not depend on the entire data, but it is non-adaptive, i.e. it does not depend on $\bm{z}^{(t)}$. So, $\mathcal{E}^i_{3}$ simply discards measurements in which $|\bm{a}_i^T\bm{x}|$ deviates a lot from $\|\bm{x}\|$, similar to the rule used during the initialization stage.

The main implication of \eqref{eq:E2} used in \cite{CheCan15} is that the magnitude of the term $\bm{a}_i^T\bm{h}$ can be controlled, which turns out to be crucial in obtaining the optimal sample complexity $m=O(n)$. Since ITWF cannot employ this truncation rule, we are not able to control the magnitude of $\bm{a}_i^T\bm{h}$. However, replacing \eqref{eq:E2} by $\mathcal{E}^i_{3}$ does not result in any deterioration in the sample complexity and computational complexity. %Furthermore, as the examples in this paper demonstrate, this additional factor does not seem to have any impact on the empirically observed sample complexity of ITWF as compared to TWF.

Furthermore, we would like to point out an empirical observation that we did not observe a significant difference in the numerical experiments even if we only employed truncation based on $\mathcal{E}_{1,t}^i$, i.e. by excluding $\mathcal{E}^i_{3}$! In fact, we also observed that TWF also performs similarly whether $\mathcal{E}_{2,t}^i$ is included or not. This suggests that while the theoretical analysis of TWF and ITWF require these additional truncation events, it might be possible to prove the convergence results via a different line of analysis that does not require introducing these events. However, we observed that truncation based on $\mathcal{E}_{1,t}^i$ is indeed crucial in the numerical experiments.

\begin{algorithm}[t]
\hrule\vspace{1mm}
 \KwData{Measurements and sampling vectors $\{y_i, \bm{a}_i\}_{1\leq i\leq m}$}
 \KwResult{$\bm{z}^{(T)}$}
 \textbf{Stage I} (initialization): 
 
 Set $\bm{z}^{(0)}$ to be $\sqrt{\frac{1}{m}\sum_{i=1}^my_i}\cdot\bm{z}_{\text{init}}$, where $\bm{z}_{\text{init}}$ is the principal eigenvector of $\frac{1}{m}\sum_{i=1}^m y_i \bm{a}_i\bm{a}_i^*\mathbbm{1}_{\left\{y_i\leq \alpha^2_y\left(\frac{1}{m}\sum_{i=1}^m y_i\right)\right\}}.$
 
\textbf{Stage II}:  

\For{$t = 1,2,\dots, T-1$}{
  Sample $i_t$ uniformly at random from $\{1,2,\dots, m\}$
  $$\bm{z}^{(t+1)} = \bm{z}^{(t)} - \mu \nabla\ell(y_{i_t},|\bm{a}_{i_t}^*\bm{z}^{(t)}|^2)\cdot \mathbbm{1}_{\mathcal{E}^{i_t}_{1,t} \cap \mathcal{E}^{i_t}_{3}}$$
   }\hrule\vspace{1mm}
   \label{algo:ITWF}
 \caption{Incremental Truncated Wirtinger Flow}
\end{algorithm}

\section{Main Results and Discussion}

We focus on the real valued case for simplicity of exposition, and %, and $\mathbb{E}_t\left[\cdot\right]$ denote the expectation conditioned on $i_1,i_2,\dots, i_{t-1}$. 
for concreteness, we fix the constants $\alpha_z^{\lb}$, $\alpha_z^{\ub}$ and $\alpha_x$ to be $0.3$, $5$ and $5$ respectively. The following two theorems are counterparts of the two main results in \cite{CheCan15}. The first result, Theorem~\ref{thm:1}, focuses on the noiseless case \eqref{eq:model_noiseless} and provides a linear convergence guarantee. 

\begin{theorem}\label{thm:1}
Under noiseless measurements \eqref{eq:model_noiseless} with $\{\bm{a}_i\}_{i=1}^m$ $\sim\mathcal{N}(0,I)$ independent, there exist universal constants $C,c_0,c_1,c_2 > 0$ and $0<\rho, \nu < 1$, such that with probability at least $1 - Cm\exp(-c_1n)$ and $\mu = c_2/n$, the iterates in Algorithm~\ref{algo:ITWF} satisfy
\begin{IEEEeqnarray}{l}
\mathbb{E}_{\mathcal{I}^t}\left[\emph{dist}^2(\bm{z}^{(t)},\bm{x})\right] \leq \nu \left(1 - \frac{\rho}{n}\right)^t\|\bm{x}\|^2,\IEEEeqnarraynumspace\label{eq:thm1}
\end{IEEEeqnarray}
if $m \geq c_0n$.
\end{theorem}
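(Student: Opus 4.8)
My plan is to run the standard stochastic-gradient template, but with the two scalar quantities that drive it replaced by the Wirtinger-flow \emph{regularity} estimates adapted to the new truncation rule $\mathcal{E}^i_3$. All randomness over the data $\{\bm{a}_i\}$ will be frozen on a single high-probability event $\mathcal{A}$ (this is where the $1-Cm\exp(-c_1n)$ comes from), and the expectation $\mathbb{E}_{\mathcal{I}^t}$ will be taken only over the sampled indices $i_0,\dots,i_{t-1}$. Write $\bm{h}^{(t)}=\bm{z}^{(t)}-\bm{x}$ (with the phase/sign aligned to $\bm{z}^{(t)}$, so that $\dist^2(\bm{z}^{(t)},\bm{x})\le\|\bm{h}^{(t)}\|^2=:X_t$), and abbreviate the truncated stochastic gradient by $\bm{g}_i(\bm{z}):=\nabla\ell(y_i,|\bm{a}_i^*\bm{z}|^2)\,\mathbbm{1}_{\mathcal{E}^i_{1}\cap\mathcal{E}^i_{3}}$. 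The crucial algebraic fact is that the \emph{expected} stochastic gradient equals the full truncated gradient, $\mathbb{E}_{i_t}[\bm{g}_{i_t}(\bm{z})]=\frac1m\sum_{i=1}^m\bm{g}_i(\bm{z})=:\bar{\bm{g}}(\bm{z})$. Conditioning on the history $\mathcal{F}_t$ and expanding the square of the update $\bm{h}^{(t+1)}=\bm{h}^{(t)}-\mu\,\bm{g}_{i_t}(\bm{z}^{(t)})$ gives the one-step identity $\mathbb{E}_{i_t}[X_{t+1}\mid\mathcal{F}_t]=X_t-2\mu\langle\bar{\bm{g}}(\bm{z}^{(t)}),\bm{h}^{(t)}\rangle+\mu^2\,\tfrac1m\sum_i\|\bm{g}_i(\bm{z}^{(t)})\|^2$.

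The whole proof then reduces to two uniform bounds, valid on $\mathcal{A}$ for every $\bm{z}$ in the basin $\|\bm{z}-\bm{x}\|\le\delta\|\bm{x}\|$ provided $m\ge c_0n$. First, a \emph{lower} (regularity) bound $\langle\bar{\bm{g}}(\bm{z}),\bm{h}\rangle\ge\lambda\|\bm{h}\|^2$ for a constant $\lambda>0$: this is the local-curvature condition of \cite{CheCan15}, re-derived with the non-adaptive event $\mathcal{E}^i_3$ in place of $\mathcal{E}^i_{2,t}$, by controlling $\frac1m\sum_i\langle\bm{g}_i,\bm{h}\rangle$ around its Gaussian expectation (which is $\gtrsim\|\bm{h}\|^2$) via concentration plus an $\varepsilon$-net over the direction $\bm{h}/\|\bm{h}\|$. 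Second, a \emph{second-moment} bound $\frac1m\sum_i\|\bm{g}_i(\bm{z})\|^2\le\Lambda n\|\bm{h}\|^2$. Here I would use that, writing $u=\bm{a}_i^T\bm{z}$, $v=\bm{a}_i^T\bm{x}$, the summand equals $4\frac{(u-v)^2(u+v)^2}{u^2}\|\bm{a}_i\|^2\mathbbm{1}_{\mathcal{E}^i_1\cap\mathcal{E}^i_3}$, and that $\mathcal{E}^i_1$ keeps $|u|$ bounded above and away from $0$ while $\mathcal{E}^i_3$ keeps $v^2=y_i$ of order $\|\bm{x}\|^2$; hence $\|\bm{g}_i\|^2\lesssim(\bm{a}_i^T\bm{h})^2\|\bm{a}_i\|^2$, and since $\mathbb{E}[(\bm{a}_i^T\bm{h})^2\|\bm{a}_i\|^2]=(n+2)\|\bm{h}\|^2$, the claim follows once the empirical average concentrates uniformly.

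Combining the two estimates in the one-step identity yields, on $\mathcal{A}$ and in the basin, $\mathbb{E}_{i_t}[X_{t+1}\mid\mathcal{F}_t]\le(1-2\mu\lambda+\mu^2\Lambda n)X_t$. Taking $\mu=c_2/n$ with $c_2<2\lambda/\Lambda$ makes $\rho:=2\lambda c_2-\Lambda c_2^2\in(0,1)$ and gives $\mathbb{E}_{i_t}[X_{t+1}\mid\mathcal{F}_t]\le(1-\rho/n)X_t$. Starting from the spectral guarantee \eqref{eq:spectral_guarantee}, $X_0\le\delta^2\|\bm{x}\|^2$, one iterates the tower property to obtain $\mathbb{E}_{\mathcal{I}^t}[X_t]\le(1-\rho/n)^tX_0\le\nu(1-\rho/n)^t\|\bm{x}\|^2$ with $\nu=\delta^2<1$, which is exactly \eqref{eq:thm1}.

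\textbf{Main obstacle.} The delicate point, and the place where the incremental method genuinely differs from TWF, is the second-moment bound together with its interaction with trajectory control. Because there is no averaging in a single stochastic step, $\|\bm{g}_i\|$ scales like $\sqrt{n}\,\|\bm{h}\|$ rather than $\|\bm{h}\|$, which is precisely what forces $\mu\sim1/n$ and the $(1-\rho/n)$ rate; and since $\mathcal{E}^i_3$ only bounds $|\bm{a}_i^T\bm{h}|$ at the scale $\|\bm{x}\|$ (not $\|\bm{h}\|$, as $\mathcal{E}^i_{2,t}$ would), proving the \emph{uniform} concentration of the heavy-tailed fourth-order sum $\frac1m\sum_i(\bm{a}_i^T\bm{h})^2\|\bm{a}_i\|^2$ over the whole basin is the technical crux — I expect to use the truncation to render the summands sufficiently light-tailed for a Bernstein/covering argument while preserving the $\|\bm{h}\|^2$ scaling of the mean. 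A secondary, but real, difficulty is ensuring the stochastic iterates remain in the basin where the regularity bound holds, since an individual noisy step can increase $X_t$; this I would handle by noting that $X_t$ is a supermartingale while in the basin and that the small step size $\mu=c_2/n$ limits one-step excursions, so a union/optional-stopping argument keeps the escape probability negligible relative to the $t$-step contraction.
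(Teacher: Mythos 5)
Your skeleton is the same as the paper's: freeze the data randomness on a high-probability event, expand the one-step conditional expectation over $i_t$ (your identity is exactly \eqref{eq:ITWF_noiseless_rec} in Proposition~\ref{prop:1}), establish two data-uniform bounds on the basin --- a curvature lower bound at scale $\|\bm{h}\|^2$ and a second-moment upper bound at scale $n\|\bm{h}\|^2$ --- choose $\mu=\Theta(1/n)$, and unroll via the tower property from the spectral initialization. Your secondary worry about iterates escaping the basin is legitimate; if anything you are more careful there than the paper, which simply iterates Proposition~\ref{prop:1}. The problem is that your plan mislocates the technical crux, and the step you treat as routine is the one that genuinely fails if executed the way you describe.

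The step you designate as the crux --- uniform concentration of the fourth-order sum $\frac{1}{m}\sum_i(\bm{a}_i^T\bm{h})^2\|\bm{a}_i\|^2$ --- is actually the easy part and needs no truncation at all: the paper's Lemma~\ref{lem:3} bounds $\max_i\|\bm{a}_i\|^2\leq 6n$ by a union bound (this is the sole source of the factor $m$ in the failure probability $1-Cm\exp(-c_1n)$) and then applies standard covariance concentration to $\frac{1}{m}\sum_i(\bm{a}_i^T\bm{h})^2$. The real difficulty sits inside your regularity bound $\langle\bar{\bm{g}}(\bm{z}),\bm{h}\rangle\geq\lambda\|\bm{h}\|^2$, which you propose to obtain by generic ``concentration plus an $\varepsilon$-net.'' The summand is $2|\bm{a}_i^T\bm{h}|^2\bigl(1+\frac{\bm{a}_i^T\bm{x}}{\bm{a}_i^T\bm{z}}\bigr)\mathbbm{1}$, and the factor $1+\frac{\bm{a}_i^T\bm{x}}{\bm{a}_i^T\bm{z}}$ is sign-indefinite: it is negative exactly when $|\bm{a}_i^T\bm{h}|\geq\max\{|\bm{a}_i^T\bm{z}|,|\bm{a}_i^T\bm{x}|\}$, hence (on $\mathcal{E}_{1,t}^i$) only when $|\bm{a}_i^T\bm{h}|\geq\alpha_z^{\lb}\|\bm{z}\|$, and there a single summand can be as large as order $\|\bm{z}\|^2\gg\|\bm{h}\|^2$ --- this is precisely the price of replacing \eqref{eq:E2} by $\mathcal{E}_3^i$. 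For the non-negative part (paper's Lemma~\ref{lem:1}) one can sandwich the discontinuous indicators by Lipschitz surrogates and discard large-$|\bm{a}_i^T\bm{h}|$ terms, because dropping non-negative terms only lowers the sum; for the negative part this trick reverses the inequality and is unavailable, so a direct Bernstein-plus-net argument at scale $\|\bm{h}\|^2$ does not go through. The paper instead isolates the negative contribution and controls it by a counting argument (Lemma~\ref{lem:2}): uniformly over $\bm{h}$, the fraction of indices with $|\bm{a}_i^T\bm{h}|\geq\alpha_z^{\lb}\|\bm{z}\|$ is $O(\|\bm{h}\|^4/\|\bm{z}\|^4+\epsilon\|\bm{h}\|^2/\|\bm{z}\|^2)$, while each such term is $O(\|\bm{z}\|^2)$ by the truncation; the product is a small multiple of $\|\bm{h}\|^2$ only because of the basin constraint, which is exactly where the peculiar third term in \eqref{eq:nbhd} originates. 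Your proposal contains no counterpart of this step, and without it (or a genuinely new treatment of the sign-indefinite cross term) the curvature bound, and hence the contraction, is unproven.
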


Thus, the (mean squared error) MSE is reduced by a factor $\left(1 - \rho/n\right)^m$ after one pass through the data.

In the above statement, $\mathbb{E}_{\mathcal{I}^t}[\cdot]$ denotes the expectation with respect to algorithm randomness $\mathcal{I}^t = \{i_1,i_2,\dots,i_{t-1}\}.$ More formally, it denotes the expectation conditioned on the data randomness $\{\bm{a}_i\}_{i=1}^m$. To keep expressions simple, we will follow the convention that $\mathbb{E}_{X}[\cdot]$ denotes expectation conditioned on all random variables except $X$.

Note that the expectation is only with respect to the algorithm randomness, not with respect to the data randomness $\{\bm{a}_i\}_{i=1}^m$. This is crucial since the data could be provided as it is, thus necessitating the need for a convergence guarantee that holds with high probability with respect to the data randomness. Since the convergence is linear, a simple application of Markov's inequality already provides a strong convergence guarantee in which the high probability also refers to the algorithm randomness.
%One can also attempt to ove convergence guarantees that hold with high probability with respect to the algorithm randomness, but we do not pursue this direction in this paper. 

Also note that linear convergence is achieved for our setup even with an incremental method since the effect of the variance of the stochastic gradient can be controlled without having to choose a step-size that decreases with iteration, while in general, incremental methods suffer from slow convergence unless variance-reducing modifications \cite{Joh13} are used. The reason for this is explained in the next subsection.

Instead of \eqref{eq:model_noiseless}, if the measurements are noisy such that
\begin{equation}\label{eq:model_noisy}y_i = |\bm{a}_i^*\bm{x}|^2 + \eta_i,\quad 1\leq i\leq m,\end{equation} where $\eta_i$ denotes the noise term (need not be stochastic), then we have the following theorem which provides a stability guarantee. 

\begin{theorem}\label{thm:2}
Under noisy measurements \eqref{eq:model_noisy} with $\{\bm{a}_i\}_{i=1}^m$ $\sim\mathcal{N}(0,I)$ independent and the noise satisfying $\|\bm{\eta}\|_{\infty} \leq \epsilon_{\eta}\|\bm{x}\|^2$ for some small constant $\epsilon_{\eta}>0$, there exist universal constants $C,c_0,c_1,c_2 > 0$ and $0<\rho, \nu < 1$, such that with probability at least $1 - Cm\exp(-c_1n)$ and $\mu = c_2/n$, the iterates in Algorithm~\ref{algo:ITWF} satisfy
\begin{IEEEeqnarray}{l}
\mathbb{E}_{\mathcal{I}^t}\left[\emph{dist}^2(\bm{z}^{(t)},\bm{x})\right] \lesssim \frac{\|\bm{\eta}\|^2}{m\|\bm{x}\|^2} + \left(1 - \frac{\rho}{n}\right)^t\|\bm{x}\|^2,\IEEEeqnarraynumspace\label{eq:thm2}
\end{IEEEeqnarray}
if $m \geq c_0n$.
\end{theorem}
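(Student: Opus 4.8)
The plan is to prove a one-step conditional contraction and then unroll it, exactly paralleling the structure one would use for Theorem~\ref{thm:1} but carrying the noise terms throughout. Writing $\bm{h}^{(t)} = \bm{z}^{(t)} - \bm{x}$ and conditioning on $\bm{z}^{(t)}$ (equivalently on $\mathcal{I}^t$ and the data), I would expand the squared error after one incremental step,
\[
\mathbb{E}_{i_t}\!\left[\|\bm{h}^{(t+1)}\|^2\right] = \|\bm{h}^{(t)}\|^2 - 2\mu\,\big\langle \bm{h}^{(t)}, \bar{\bm{g}}^{(t)}\big\rangle + \mu^2\,\mathbb{E}_{i_t}\!\left[\big\|\nabla\ell(y_{i_t},|\bm{a}_{i_t}^*\bm{z}^{(t)}|^2)\big\|^2 \mathbbm{1}_{\mathcal{E}^{i_t}_{1,t}\cap\mathcal{E}^{i_t}_3}\right],
\]
where $\bar{\bm{g}}^{(t)} = \frac{1}{m}\sum_{i=1}^m \nabla\ell(y_i,|\bm{a}_i^*\bm{z}^{(t)}|^2)\mathbbm{1}_{\mathcal{E}^{i}_{1,t}\cap\mathcal{E}^{i}_3}$ is precisely the truncated full-gradient direction, since $i_t$ is uniform. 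Thus the first-order term is driven by the full gradient while the last term is the stochastic-gradient variance that TWF does not incur.

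For the descent term I would invoke a regularity (local curvature) condition: with probability $\geq 1 - Cm\exp(-c_1 n)$ over the data and uniformly over $\bm{z}$ with $\dist(\bm{z},\bm{x})\leq\delta\|\bm{x}\|$, the noiseless part of $\langle\bm{h},\bar{\bm{g}}\rangle$ is $\gtrsim \|\bm{h}\|^2$. Using $(\bm{a}_i^T\bm{z})^2 - (\bm{a}_i^T\bm{x})^2 = (\bm{a}_i^T\bm{h})(\bm{a}_i^T(\bm{z}+\bm{x}))$ and $\bm{z}+\bm{x} = 2\bm{z}-\bm{h}$, this inner product splits into a dominant quadratic form $\frac{4}{m}\sum_i (\bm{a}_i^T\bm{h})^2\mathbbm{1}_i \gtrsim \|\bm{h}\|^2$ and a cubic correction $\frac{2}{m}\sum_i (\bm{a}_i^T\bm{h})^3/(\bm{a}_i^T\bm{z})\,\mathbbm{1}_i$ that should be $O(\delta\|\bm{h}\|^2)$ on a small basin. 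For the variance term, dropping the indicators (legitimate for an upper bound) and using $\mathcal{E}^i_{1,t}$ to lower-bound the denominator and $\mathcal{E}^i_3$ to bound $|\bm{a}_i^T\bm{x}|$, I would obtain $\mathbb{E}_{i_t}[\|\nabla\ell\|^2\mathbbm{1}] \lesssim n\cdot\frac{1}{m}\sum_i(\bm{a}_i^T\bm{h})^2 \lesssim n\|\bm{h}\|^2$. With $\mu = c_2/n$ the variance contributes $\lesssim \frac{c_2^2}{n}\|\bm{h}\|^2$, of the same order as the descent gain, so a small $c_2$ yields the noiseless contraction. Crucially the variance is proportional to $\|\bm{h}\|^2$ rather than a fixed constant, which is exactly why a constant step size still gives linear convergence.

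For Theorem~\ref{thm:2} I would repeat this with $y_i = (\bm{a}_i^T\bm{x})^2 + \eta_i$, so each gradient acquires an additive term $-\frac{2\eta_i}{\bm{a}_i^T\bm{z}}\bm{a}_i$. In the descent term this produces a cross term $\frac{2}{m}\sum_i \frac{\eta_i(\bm{a}_i^T\bm{h})}{\bm{a}_i^T\bm{z}}\mathbbm{1}_i$, which by Cauchy--Schwarz, $\mathcal{E}^i_{1,t}$, and $\frac{1}{m}\sum_i(\bm{a}_i^T\bm{h})^2 \lesssim \|\bm{h}\|^2$ is at most $\lesssim \frac{\|\bm{\eta}\|\,\|\bm{h}\|}{\sqrt{m}\,\|\bm{x}\|}$, while the $\eta_i^2$ contribution to the variance is $\lesssim \frac{n}{\|\bm{x}\|^2}\cdot\frac{\|\bm{\eta}\|^2}{m}$. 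Multiplying by $\mu$ and $\mu^2$ and applying Young's inequality to the cross term (splitting it into $\tfrac{\rho}{2n}\|\bm{h}\|^2$ plus a pure noise part), the one-step bound becomes
\[
\mathbb{E}_{i_t}\!\left[\|\bm{h}^{(t+1)}\|^2\right] \leq \left(1-\frac{\rho}{n}\right)\|\bm{h}^{(t)}\|^2 + \frac{c}{n}\cdot\frac{\|\bm{\eta}\|^2}{m\,\|\bm{x}\|^2}.
\]
Taking total expectation over $\mathcal{I}^t$, unrolling, and summing the geometric series $\sum_{k\geq0}(1-\rho/n)^k \leq n/\rho$ converts the per-step floor into the stationary floor $\frac{c}{\rho}\cdot\frac{\|\bm{\eta}\|^2}{m\|\bm{x}\|^2}$, while the homogeneous part decays as $(1-\rho/n)^t\|\bm{h}^{(0)}\|^2$. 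Bounding $\|\bm{h}^{(0)}\|^2\leq\nu\|\bm{x}\|^2$ via the spectral initialization \eqref{eq:spectral_guarantee}, which remains valid under a noisy version of the guarantee for $\epsilon_\eta$ small, delivers exactly \eqref{eq:thm2}.

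The main obstacle is establishing the regularity lower bound $\langle\bm{h},\bar{\bm{g}}\rangle \gtrsim \|\bm{h}\|^2$ uniformly over the whole basin and with high probability, because the iterates $\bm{z}^{(t)}$ are correlated with the data and because ITWF's \emph{non-adaptive} truncation $\mathcal{E}^i_3$ cannot control the individual magnitudes $|\bm{a}_i^T\bm{h}|$ --- precisely the quantity that TWF's data-dependent event $\mathcal{E}^i_{2,t}$ was designed to tame. Consequently the cubic correction $\frac{1}{m}\sum_i (\bm{a}_i^T\bm{h})^3/(\bm{a}_i^T\bm{z})\,\mathbbm{1}_i$, which TWF could bound term-by-term, must instead be controlled in aggregate through uniform concentration of truncated third/fourth-order polynomial forms in the Gaussian design and shown to be $o(\|\bm{h}\|^2)$ on a sufficiently small basin $\delta$. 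Ensuring that the additive noise does not corrupt this curvature --- so that the sign of the dominant quadratic form is preserved and the iterates remain in the basin --- is what forces the hypothesis $\|\bm{\eta}\|_\infty \leq \epsilon_\eta\|\bm{x}\|^2$ with $\epsilon_\eta$ small, and obtaining all of these estimates at the optimal sample size $m = O(n)$ is the technical heart of the argument.
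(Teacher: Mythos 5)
Your proposal is correct at essentially the paper's own level of rigor, and its first half coincides with the paper's proof: the paper likewise expands $\mathbb{E}_{i_t}\left[\dist^2(\bm{z}^{(t+1)},\bm{x})\right]$ into the truncated full-gradient descent term, the $(\bm{a}_i^T\bm{h})\eta_i$ cross terms, and the stochastic variance (its \eqref{eq:noisy_rec}), lower-bounds the quadratic form via a noisy variant of Lemma~\ref{lem:1} (Lemma~\ref{lem:4}), controls your ``cubic correction'' through Lemma~\ref{lem:2} (the same object, since $\frac{\bm{a}_i^T\bm{x}}{\bm{a}_i^T\bm{z}} = 1 - \frac{\bm{a}_i^T\bm{h}}{\bm{a}_i^T\bm{z}}$), bounds the variance by $c_4 n(1+\delta)\|\bm{h}\|^2$ plus noise terms (Lemma~\ref{lem:3}), and applies Cauchy--Schwarz to the cross terms, arriving at \eqref{eq:noisy_simplified} --- the same one-step inequality you derive. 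Where you genuinely diverge is the endgame. You absorb the cross term via Young's inequality into a single recursion $\mathbb{E}_{i_t}\left[\|\bm{h}^{(t+1)}\|^2\right] \leq \left(1-\frac{\rho}{n}\right)\|\bm{h}^{(t)}\|^2 + \frac{c}{n}\cdot\frac{\|\bm{\eta}\|^2}{m\|\bm{x}\|^2}$ valid throughout the basin, then unroll and sum the geometric series to obtain the stationary floor. The paper instead runs the two-regime argument inherited from \cite{CheCan15}: in Regime~I ($\|\bm{h}\| \geq c_3\|\bm{\eta}\|/(\sqrt{m}\|\bm{z}\|)$) the noise terms are dominated by $\|\bm{h}\|^2$ and one gets pure geometric contraction with no additive floor; in Regime~II the error sits at the noise floor and is argued not to escape past Regime~I. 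Both endgames deliver \eqref{eq:thm2}; yours is the more streamlined, standard SGD-with-additive-noise argument (no case split, the floor emerging as $\frac{c}{\rho}\cdot\frac{\|\bm{\eta}\|^2}{m\|\bm{x}\|^2}$ from the series), while the paper's split yields the slightly sharper qualitative fact that above the noise level the error contracts geometrically with no additive term. One caveat applies to both equally: the one-step bound holds only for iterates satisfying \eqref{eq:nbhd}, and neither your unrolling nor the paper's iteration rigorously establishes that the random iterates remain in that neighborhood (an in-expectation contraction does not give almost-sure containment), so your proposal is no weaker than the published proof on this point.
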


As described in \cite{CheCan15}, this result can be applied to the case when the  measurements are obtained independently according to a Poisson noise model:
\begin{equation}\label{eq:poisson}y_i \sim \text{Poisson}(|\bm{a}_i^*\bm{x}|^2)\quad 1\leq i\leq m,\end{equation}
and the solution satisfies $\|\bm{x}\|^2\geq \log^3m$ (required to ensure that the condition $\|\bm{\eta}\|_{\infty} \leq \epsilon_{\eta}\|\bm{x}\|^2$ holds with high probability).

\subsection{Intuition}\label{subsec:proof_sketch}

Consider the noiseless case. The reason why we can expect linear convergence from ITWF can be understood by the following thought experiment. Say, after the $t^{\text{th}}$ iteration of ITWF, we had the luxury of obtaining a new measurement by sampling $\bm{a}_t$ independently from the population distribution $\mathcal{N}(0,I)$, instead of being restricted to sample from the empirical distribution  $\{\bm{a}_i\}_{i=1}^m$.  

Let $\bm{h}$ denote $\bm{z}^{(t)} - \bm{x}$, and $\mathcal{E}_t = \mathcal{E}_{1,t} \cap \mathcal{E}_{3} = \left\{ \alpha_z^{\lb} \leq \frac{|\bm{a}_t^*\bm{z}^{(t)}|}{\|\bm{z}^{(t)}\|} \leq \alpha_z^{\ub} \right\} \cap  \left\{ \frac{|\bm{a}_t^T\bm{x}|}{\|\bm{x}\| } \leq \alpha_x \right\}$ denote the overall truncation event. Then the expected distance to the optimal solution after performing the ITWF update \eqref{eq:inc_trunc_grad} using the new independent measurement, is
\begin{IEEEeqnarray}{rCl}
\mathbb{E}_{t}\left[\dist^2(\bm{z}^{(t+1)},\bm{x})\right] & = & \|\bm{h}\|^2 - 4\mu\mathbb{E}_t\left[ |\bm{a}_t^T\bm{h}|^2\left(1 + \frac{\bm{a}_t^T\bm{x}}{\bm{a}_t^T\bm{z}^{(t)}}\right)\mathbbm{1}_{\mathcal{E}_{t}}\right]\nonumber\\
&& \quad +\> 4\mu^2\mathbb{E}_t\left[ \|\bm{a}_t\|^2|\bm{a}_t^T\bm{h}|^2\left(1 + \frac{\bm{a}_t^T\bm{x}}{\bm{a}_t^T\bm{z}^{(t)}}\right)^2 \mathbbm{1}_{\mathcal{E}_{t}}\right]\label{eq:thought_ITWF_noiseless_rec}
\end{IEEEeqnarray}

Since $\bm{a}_t\sim\mathcal{N}(0,I)$ independent of $\bm{h},\bm{z},\bm{x}$, it is not difficult to show that if $\|\bm{h}\|$ is sufficiently small compared to $\|\bm{x}\|$,
\begin{IEEEeqnarray*}{l}
\mathbb{E}_t\left[ |\bm{a}_t^T\bm{h}|^2\left(1 + \frac{\bm{a}_t^T\bm{x}}{\bm{a}_t^T\bm{z}}\right)\mathbbm{1}_{\mathcal{E}_{t}}\right] \geq (1 - c_1)\|\bm{h}\|^2 ,\>\>
\mathbb{E}_t\left[ \|\bm{a}_t\|^2|\bm{a}_t^T\bm{h}|^2\left(1 + \frac{\bm{a}_t^T\bm{x}}{\bm{a}_t^T\bm{z}}\right)^2 \mathbbm{1}_{\mathcal{E}_{t}} \right]\lesssim n \|\bm{h}\|^2,
\end{IEEEeqnarray*}
for an appropriate constant $c_1$. As can be observed from the latter inequality, the variance of the stochastic gradient is proportional to $\|\bm{h}\|^2$. As a result, the variance is not only bounded, but reduces as we get closer to the solution, which allows us to choose a non-diminishing step-size $\mu = \Theta\left(\frac{1}{n}\right)$, resulting in linear convergence:
$$\mathbb{E}_{t}\left[\dist^2(\bm{z}^{(t+1)},\bm{x})\right] \leq \left(1 - \frac{\rho}{n}\right)\dist^2(\bm{z}^{(t)},\bm{x}).$$

In the above thought experiment, we had the luxury of an infinite amount of measurements at our disposal. Back in reality, we have the restriction of using a finite set of $m$ measurements. Since we need to keep sampling from this set to simulate the stochastic gradient descent from our thought experiment, the proof of Theorem~\ref{thm:1} mainly involves showing that we can draw similar conclusions despite the finiteness of the data set and the resulting dependence between the sensing vectors and the path traversed by the algorithm.

\section{Numerical Experiments}\label{sec:examples}

To provide evidence for the performance of ITWF and comparisons to existing algorithms, we provide numerous examples in this section.

% \begin{figure}[!t]
%\centering
%\subfigure[Empirical Success Rate (Example 1)]{
%\label{fig:example_1}\includegraphics[width=0.48\textwidth]{example_1}}
%\subfigure[Relative error (Example 2)]{
%\label{fig:example_2}\includegraphics[width=0.48\textwidth]{example_2}}
%\end{figure}

\subsection{Example 1: Sample complexity}
\begin{wrapfigure}{r}{0.5\textwidth}\vspace{-7mm}
\includegraphics[width=0.5\textwidth]{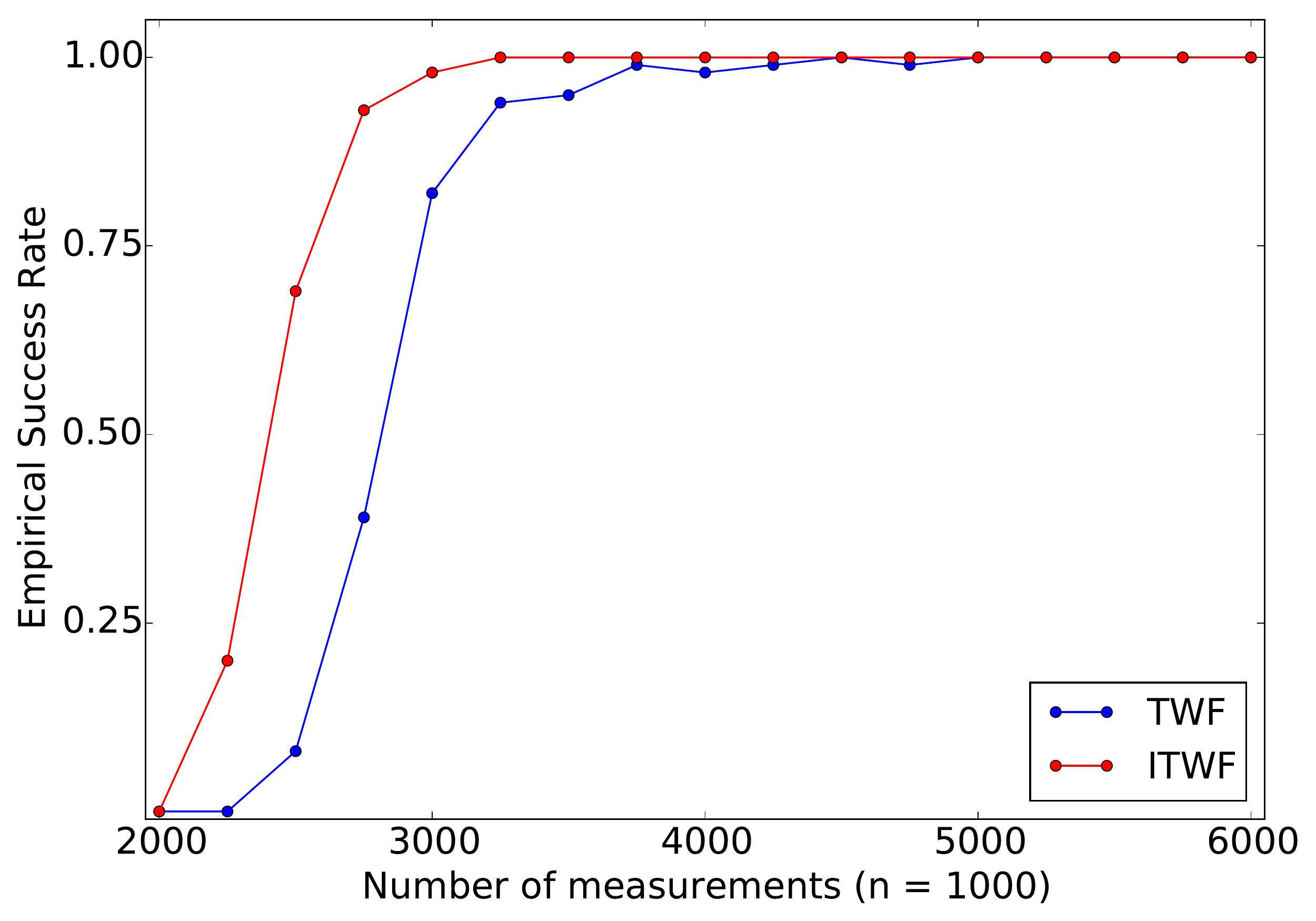}
\caption{Empirical Success Rate}\label{fig:example_1}
\end{wrapfigure} 
We compare the sample complexity of TWF and ITWF empirically when the sensing vectors $\bm{a}_i$ are i.i.d. $\mathcal{N}(0,I)$. The dimension $n$ is chosen to be 1000, and the number of measurements $m$ is varied from $2n$ to $6n$, and success is declared if the relative root mean squared error $\frac{\dist(\bm{z},\bm{x})}{\|\bm{x}\|}$ is less than $10^{-5}$ within 1000 passes through the data. The initialization uses 50 truncated power iterations. It can be seen from Figure~\ref{fig:example_1}, which is obtained by averaging over 100 Monte Carlo trials at each value of $m$, that the empirical success rate of ITWF is comparable with that of TWF, even slightly better.%, despite the additional $\log n$ factor given by the theory. 

\subsection{Example~2: Computational Complexity (Stage II)}
\begin{wrapfigure}{r}{0.5\textwidth}
\includegraphics[width=0.5\textwidth]{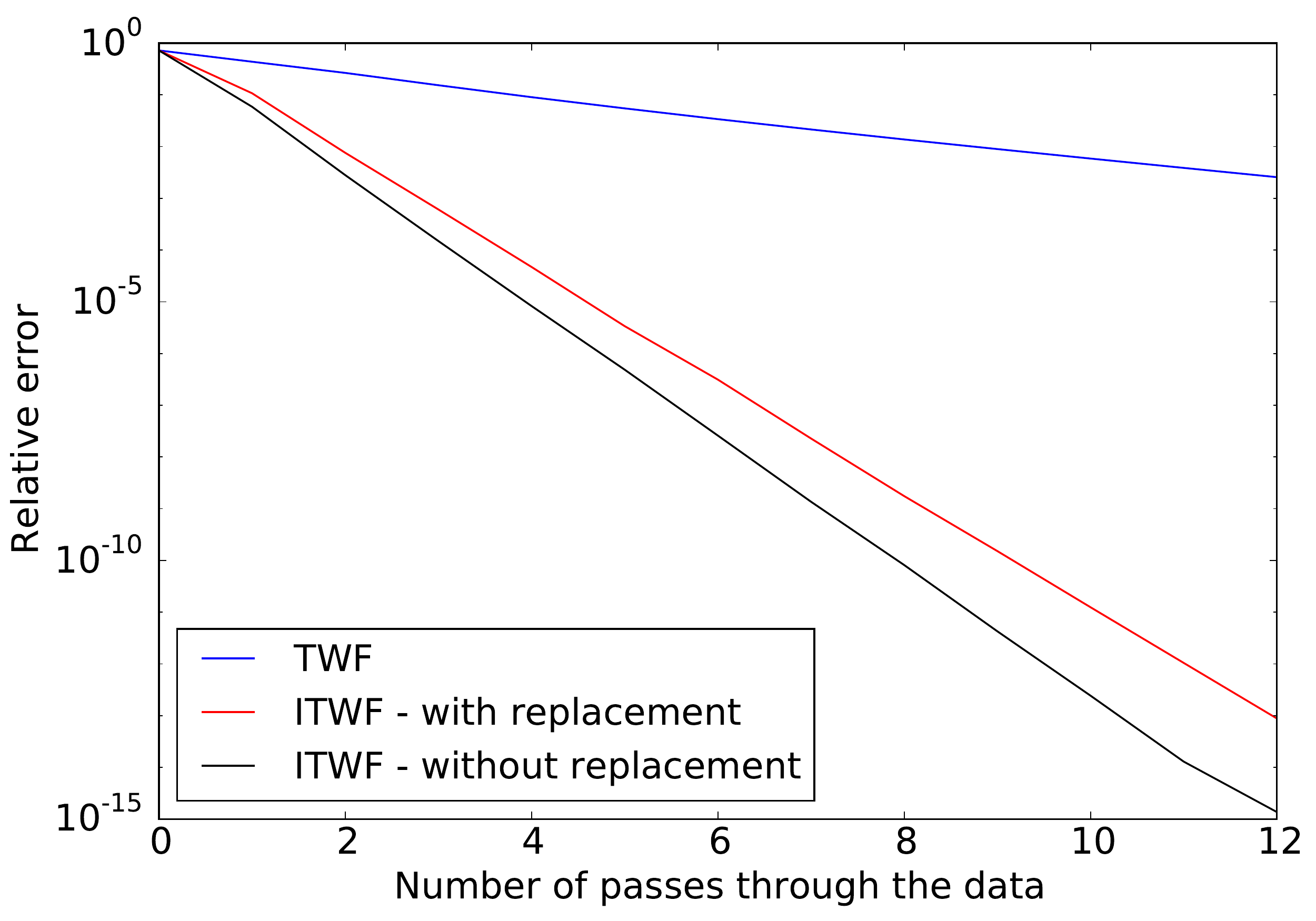}
\caption{Relative Error}\label{fig:example_2}
\end{wrapfigure} 
We consider the same setup as the previous example, with $m$ fixed to be 8000. Figure~\ref{fig:example_2} shows the relative root mean squared error $\frac{\dist(\bm{z},\bm{x})}{\|\bm{x}\|}$ of three algorithms, each run with the best step size, as a function of the number of passes through the data. All algorithms were initialized using 10 power iterations. %The blue line shows the performance of Truncated Wirtinger Flow (TWF) from \cite{CheCan15}. The red line shows the performance of the Incremental Truncated Wirtinger Flow (ITWF) that we have described in this paper.
The black line corresponds to the without-replacement variant of ITWF, in which all data points are visited exactly once in every block of $m$ iterations, each time in an independently chosen random order. The performance of the without-replacement variant is marginally better than the with-replacement variant, as has also been observed in numerous other contexts. However, the main point of the example is to show that ITWF offers substantial improvements in computational complexity over TWF. Owing to the better performance of the without-replacement variant, we adopt this sampling method for the remainder of the numerical experiments.

\subsection{Example 3: Structured Sensing Vectors}
To demonstrate the performance of the algorithm on a real signal when the assumption of random Gaussian sensing vectors do not hold, we consider another example from \cite{CheCan15}. An image of Stanford main quad of size 320 $\times$ 1280 pixels is used, and measurements are obtained via a set of $L$ coded diffraction patterns as
$$\bm{y}^{(l)} = (\bm{F}\bm{D}^{(l)}\bm{x})^* \odot (\bm{F}\bm{D}^{(l)}\bm{x}), \quad 1\leq l\leq L, $$
where $\bm{F}$ is the DFT matrix and $\bm{D}^{(l)}$ is a diagonal matrix (representing a mask) containing independent entries, each uniformly distributed over $\{+1, -1, j, -j\}$. The number of measurements is $nL$ with $L=12$, and the notation $|\cdot|^2$ denotes elementwise magnitude squared.

As in \cite{CheCan15}, we initialize the algorithm with 50 truncated power iterations, at the end of which the relative root mean squared error (rmse) is $0.35$. Another pass through the data using TWF gets it down to $0.15$. Increasing the number of passes to 2, 5 and 10 achieves $7.1\times 10^{-2}$, $2.3\times 10^{-2}$ and $6.6\times 10^{-3}$ respectively. Making one pass through the data using ITWF gets it down from $0.35$ to $8.2\times 10^{-3}$. Increasing the number of passes to 2, 5 and 10 achieves $3.0\times 10^{-4}$, $1.4\times 10^{-8}$ and $9.0\times 10^{-16}$ respectively.

\emph{Remark:} The sensing matrices in this example are highly structured (DFT, diagonal), due to which computing the sum of the $n$ gradients for one mask can be accomplished with $O(n\log n)$ computations instead of $O(n^2)$, by utilizing FFT algorithms. Hence, the motivation for ITWF that one iteration can be made $m$-times cheaper does not hold in this case. 
Hence we consider an \emph{increment} to be the set of measurements corresponding to 
one mask ($L$ measurements) instead of one measurement. This means that each iteration of ITWF is $L$-times cheaper than that of TWF where $L$ is the total number of masks; thus one iteration of TWF is computationally equivalent to $L$ iterations of ITWF. %Of course, one can further optimize by considering increments at the level of half of the measurements corresponding to one mask and so on, but we do not pursue this direction here. 

\begin{figure}[!t]
    \centering    
    \subfigure{\label{fig:ex_2_init}\includegraphics[width=\textwidth]{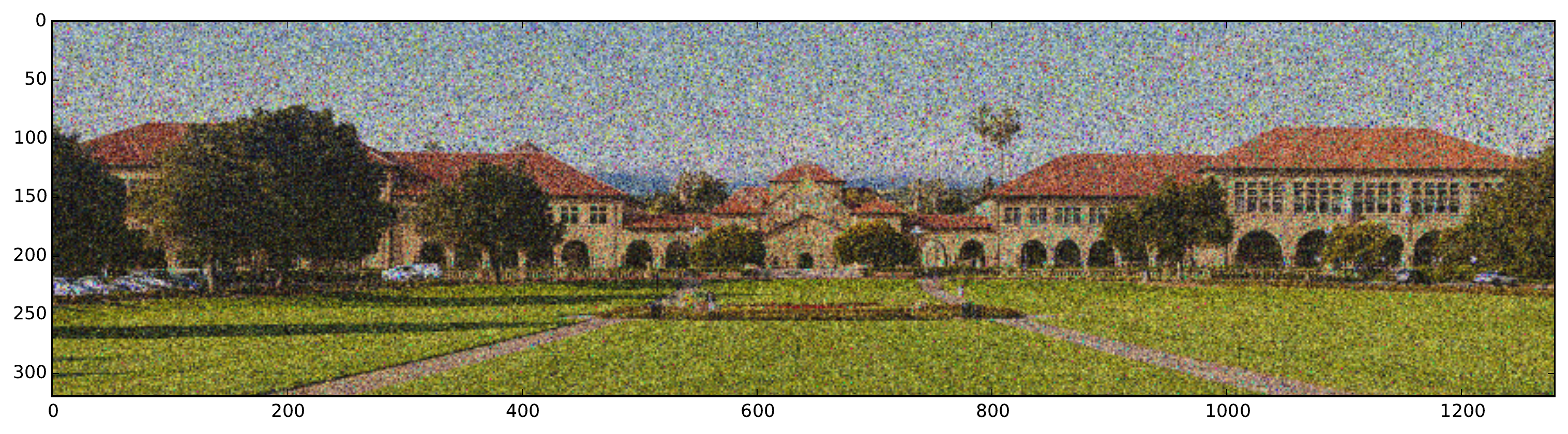}}\vspace{-2mm}
    \subfigure{\label{fig:ex_2_twf}\includegraphics[width=\textwidth]{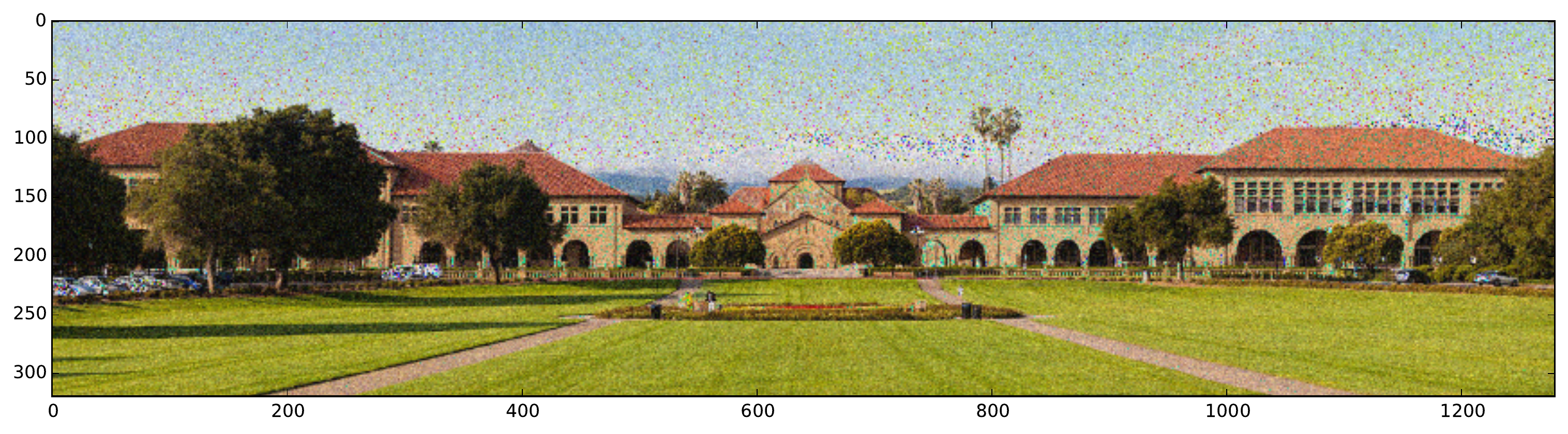}}\vspace{-2mm}
    \subfigure{\label{fig:ex_2_itwf}\includegraphics[width=\textwidth]{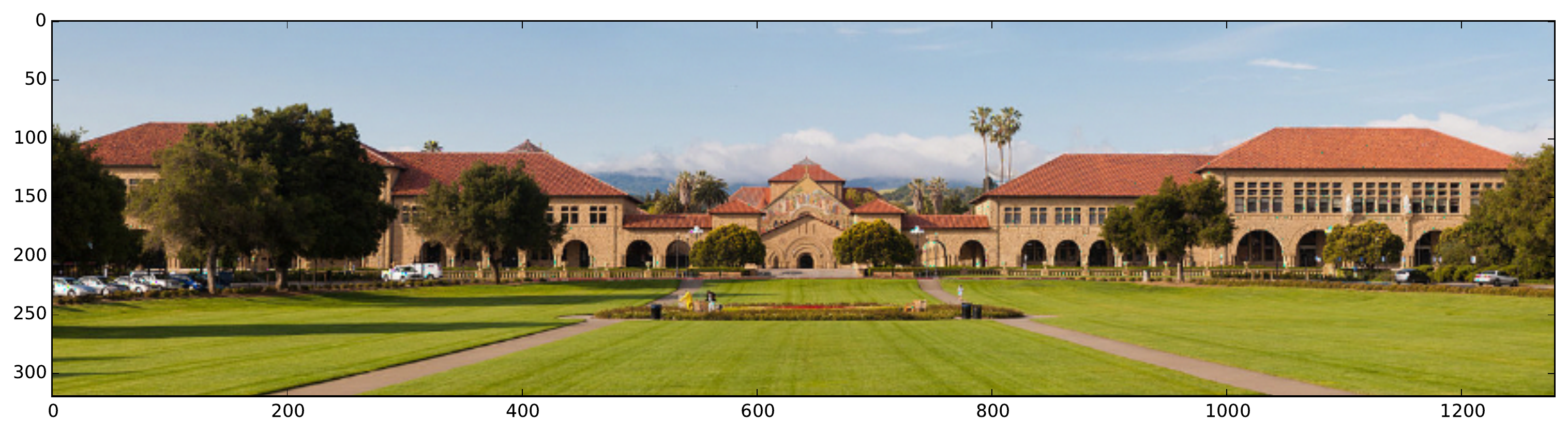}}    
        \caption{Results for Example 3: Recovered image after (top) initialization (50 power iterations), (center) 1 pass using TWF, (bottom) 1 pass using ITWF}
        \label{fig:example_3}
\end{figure}

\subsection{Example 4: Noisy measurements}
%\begin{wrapfigure}{r}{0.5\textwidth}
%\includegraphics[width=0.5\textwidth]{mse_comparison}\label{fig:ex_4_1}
%\caption{Final Relative MSE achieved as a function of SNR}
%\end{wrapfigure} 
In this example, we generate noisy measurements as follows. As before, the sensing vectors are chosen to be i.i.d. $\mathcal{N}(0,I)$. The measurements are generated according to \eqref{eq:poisson}. Theorem~\ref{thm:2} effectively says that ITWF achieves \begin{equation}\label{eq:thm2_SNR}\frac{\dist^2(\bm{z}^{(T)},\bm{x})}{\|\bm{x}\|^2}\sim \frac{1}{\SNR},\end{equation} where $\SNR$ is the signal-to-noise ratio $\frac{\sum_{i=1}^m(\bm{a}_i^T\bm{x})^4}{\sum_{i=1}^m\eta_i^2}$, which is approximately $\frac{3m\|\bm{x}\|^4}{\|\bm{\eta}\|^2}.$

\begin{wrapfigure}{r}{0.5\textwidth}
\centering
\includegraphics[width=0.5\textwidth]{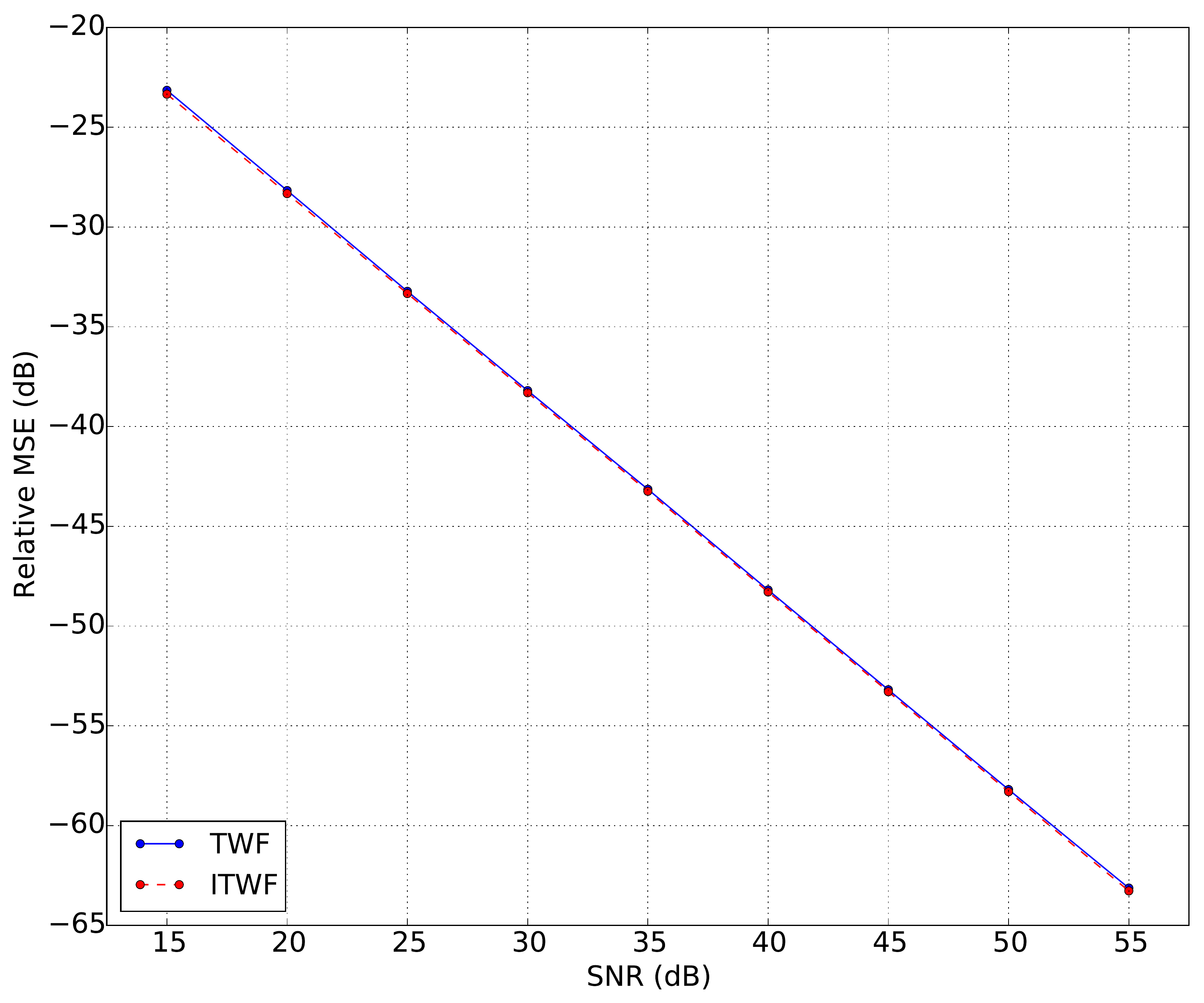}
\caption{Final Relative MSE achieved as a function of SNR}\label{fig:ex_4_1}
\end{wrapfigure}

Under the Poisson model, since $\sum_{i=1}^m\eta_i^2\approx \sum_{i=1}^m|\bm{a}_i^T\bm{x}|^2\approx m\|\bm{x}\|^2$, we refer to $3\|\bm{x}\|^2$ as the SNR. Comparing the final relative MSE of TWF and ITWF at various values of SNR Figure~\ref{fig:ex_4_1} shows that the final relative MSE (LHS of \eqref{eq:thm2_SNR}) of ITWF is in fact nearly equal to that of ITWF at all values of SNR.

The gain in computational complexity provided by ITWF that we observed in the noiseless case (ref. Figure~\ref{fig:example_2}) is however abated in the noisy case, as can be seen from the left panel in Figure~\ref{fig:example_4}. Experimenting with a diminishing step-size rule (step-size in $\ell$th pass chosen as $\Theta\left(\frac{1}{n\ell}\right)$), we find that it not only results in a small final MSE, but also reduces the number of passes it takes for ITWF to reach close to the final MSE by allowing us to exploit a larger step-size in the initial passes, as shown in the right panel in Figure~\ref{fig:example_4}.

\begin{figure*}[!h]
    \centering    
    \subfigure{\includegraphics[width=0.48\textwidth]{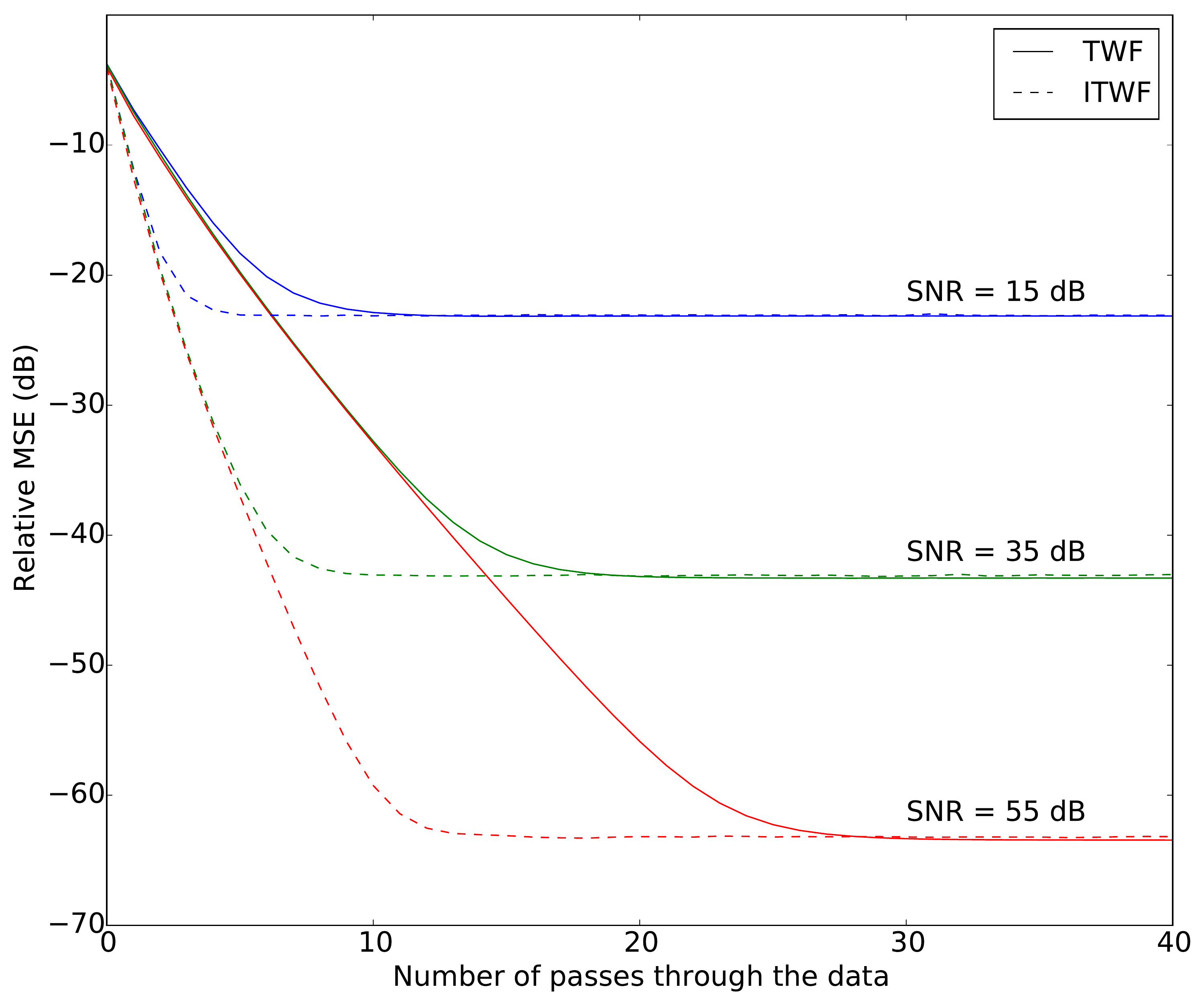}}\hspace{4mm}
        \subfigure{\includegraphics[width=0.48\textwidth]{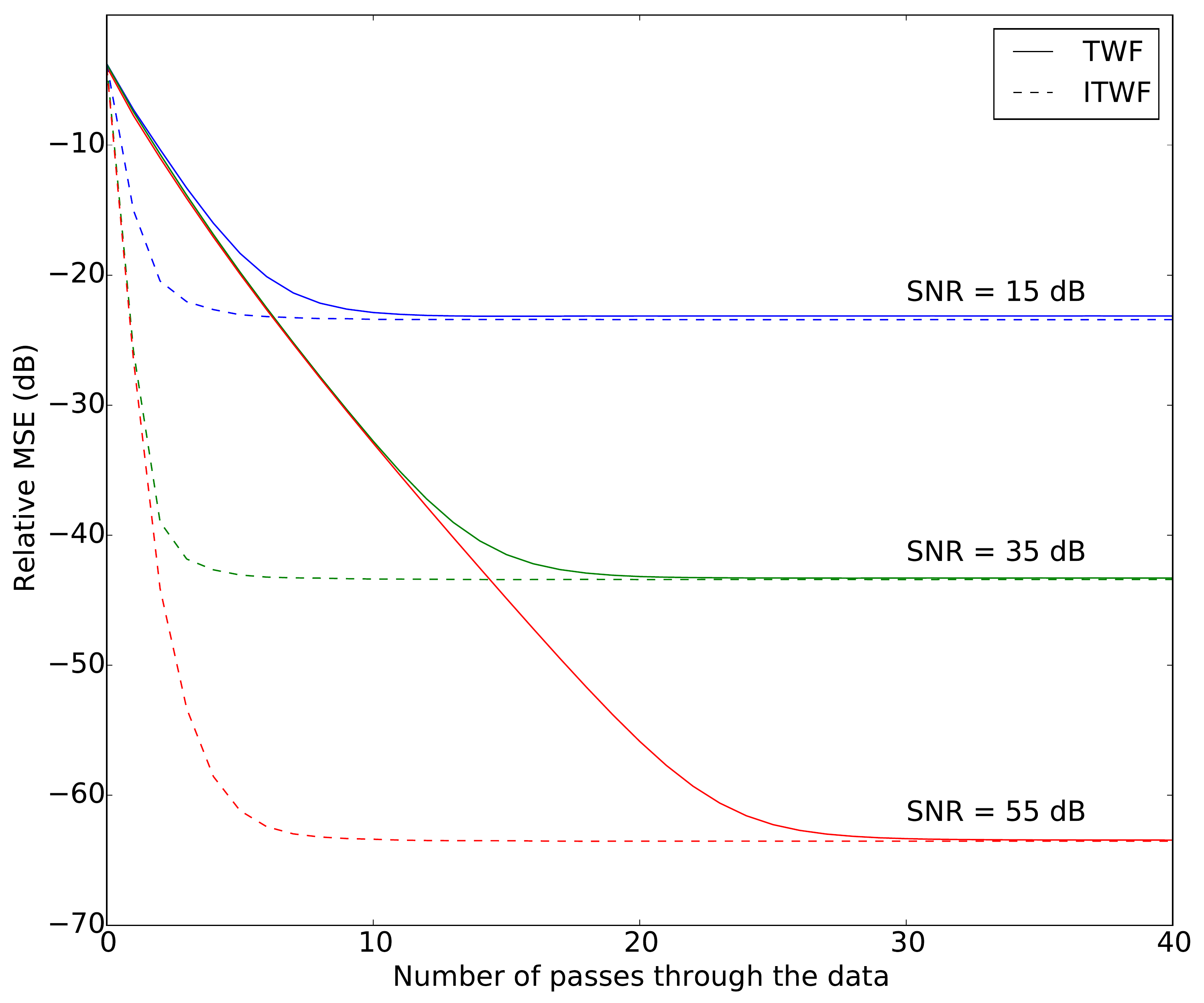}}
        \caption{Relative MSE achieved by TWF (step size 0.2) compared with (left) ITWF with constant step-size and (right) ITWF with a decreasing step size}\label{fig:example_4}
\end{figure*}

\section{Remarks and Future Directions}
While loosely bounding the constants arising during the analysis provides a recommendation for the step-size (for the random Gaussian model) which is $\approx \frac{0.00016}{n}$, this can be slightly pessimistic. In our numerical experiments, we find that larger step sizes also work.%, which as seen in Section~\ref{sec:examples} contributes to speeding up convergence.

%Example 1 suggests that it could be possible via strengthening the analysis or introducing a slightly different truncation rule to remove the $\log$ factor in the sample complexity. 

Developing incremental methods that are able to extract the benefits of truncation for other problems such as sparse phase retrieval, low-rank matrix recovery from linear/quadratic measurements would be highly interesting.

%\subsubsection*{Acknowledgments}

%to be included in final version

%\section*{References}

\bibliography{phase_references}
\bibliographystyle{unsrt} 
\vfill
\pagebreak

\section{Appendix}

\subsection{Proof of Theorem~\ref{thm:1}}
The proof requires introducing some constants, and we collect their definitions before proceeding to the proof. 

Define
$$\zeta_1 := \max\left\{\begin{array}{c}\mathbb{E}\left[\xi^2\mathbbm{1}_{|\xi|\notin (\sqrt{1.01}\alpha_z^{\lb} , \sqrt{0.99}\alpha_z^{\ub})}\right], \\ \mathbb{E}\left[\mathbbm{1}_{|\xi|\notin (\sqrt{1.01}\alpha_z^{\lb} , \sqrt{0.99}\alpha_z^{\ub})}\right]\end{array} \right\},$$
and
$$\zeta_2 := \max\left\{\begin{array}{c}\mathbb{E}\left[\xi^2\mathbbm{1}_{|\xi|\geq  0.98\alpha_x }\right], \\ \mathbb{E}\left[\mathbbm{1}_{|\xi|\geq 0.98\alpha_x}\right]\end{array} \right\},$$
and
$$\zeta_3 := \mathbb{E}\left[\xi^2\mathbbm{1}_{|\xi|\geq  \sqrt{0.99}\gamma }\right],$$
where $\xi\sim\mathcal{N}(0,1)$ and $\gamma,$ $\alpha_x$, $\alpha_z^\ub$ and $\alpha_z^\lb$ can be chosen to be $5$, $0.3$, $5$ and $5$ respectively.

The term $\delta_{\init}$ appearing in the proofs can be considered to be equal to 0.1 for concreteness. 

The intermediate lemmas and propositions will establish statements that hold for all $\bm{z}$ that are in a neighborhood of $\bm{x}$ as follows:
\begin{equation}\label{eq:nbhd}
\frac{\dist(\bm{z},\bm{x})}{\|\bm{z}\|}\leq \min\left\{\delta_{\init}, \frac{\alpha_z^{\lb}}{6},\frac{\sqrt{98(\alpha_z^{\lb})^5}}{\sqrt{(\alpha_z^{\ub} + 1.01\alpha_x (1 + \delta_{\init}))^3}}  \right\},
\end{equation}
Note that the initialization stage can be used to guarantee that $\bm{z}^{(0)}$ is in this neighborhood of $\bm{x}$. 

Then, to prove the theorem, it suffices to prove Proposition~\ref{prop:1}. The reason why proving this proposition suffices is as follows. Consider the statement \eqref{eq:thm} in Proposition~\ref{prop:1}. We can take expectation of the RHS with respect to $i_{t-1}$, and then apply Proposition~\ref{prop:1} again to get a similar relation for the previous iteration. Continuing this process till we arrive at the initialization point $\bm{z}^{(0)}$, and noting that the initialization stage guarantees that $\dist(\bm{z}^{(0)}, \bm{x})$ is at most a small constant times $\|\bm{x}\|$, we arrive at the statement of Theorem~\ref{thm:1}. This concludes the proof of Theorem~\ref{thm:1}.\qed %Since each iteration is a pass over a fraction $\frac{1}{m}$ of the entire data, one pass over the data reduces the suboptimality by a factor $$\left(1 - \frac{c}{n}\right)^m \approx n^{-c},$$ giving us the desired result.
%given in Section~\ref{subsec:proof_sketch}.

\begin{proposition}\label{prop:1}
When $\{\bm{a}_i\}_{i=1}^m$ are independent $\mathcal{N}(0,I)$, there exist universal constants $C,c_0,c_1,c_2,\rho > 0$, such that with probability at least $1 - Cm\exp(-c_1n)$ and $\mu = c_2/n$, we have
\begin{IEEEeqnarray}{l}
\mathbb{E}_{i_t}\left[\emph{dist}^2(\bm{z}^{(t+1)},\bm{x})\right] \leq \left(1 - \frac{\rho}{n}\right)\cdot\emph{dist}^2(\bm{z}^{(t)},\bm{x}),\IEEEeqnarraynumspace\label{eq:thm}
\end{IEEEeqnarray}
holds simultaneously for all $\bm{z}^{(t)}\in\mathbb{R}^n$ satisfying \eqref{eq:nbhd}, if $m \geq c_0n$.
\end{proposition}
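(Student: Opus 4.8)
The plan is to expand the one-step update, reduce the claim to two estimates on empirical averages that must hold \emph{uniformly} over the neighborhood \eqref{eq:nbhd} — a lower bound on the expected descent and an upper bound on the expected squared step — and then to prove these two estimates by combining the population computation sketched in Section~\ref{subsec:proof_sketch} with a covering argument that controls the data-dependence of the iterate.

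First I would make the recursion explicit. Writing $\bm{h} = \bm{z}^{(t)} - \bm{x}$ and using $\dist^2(\bm{z}^{(t+1)},\bm{x}) \leq \|\bm{z}^{(t+1)} - \bm{x}\|^2$ together with $\bm{a}_i^T\bm{z}^{(t)} = \bm{a}_i^T\bm{x} + \bm{a}_i^T\bm{h}$, the Wirtinger gradient simplifies to $\nabla\ell_i = 2(\bm{a}_i^T\bm{h})\bigl(1 + \bm{a}_i^T\bm{x}/\bm{a}_i^T\bm{z}^{(t)}\bigr)\bm{a}_i$, so that averaging over the uniform choice of $i_t$ gives
\[
\mathbb{E}_{i_t}\bigl[\dist^2(\bm{z}^{(t+1)},\bm{x})\bigr] \leq \|\bm{h}\|^2 - 4\mu\, A(\bm{z}^{(t)}) + 4\mu^2 B(\bm{z}^{(t)}),
\]
where $A(\bm{z}) = \frac{1}{m}\sum_i |\bm{a}_i^T\bm{h}|^2\bigl(1 + \tfrac{\bm{a}_i^T\bm{x}}{\bm{a}_i^T\bm{z}}\bigr)\mathbbm{1}_{\mathcal{E}_t^i}$ and $B(\bm{z}) = \frac{1}{m}\sum_i \|\bm{a}_i\|^2|\bm{a}_i^T\bm{h}|^2\bigl(1 + \tfrac{\bm{a}_i^T\bm{x}}{\bm{a}_i^T\bm{z}}\bigr)^2\mathbbm{1}_{\mathcal{E}_t^i}$ are the empirical counterparts of the two population quantities appearing in \eqref{eq:thought_ITWF_noiseless_rec}.

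The proposition then follows from two estimates, required to hold simultaneously for \emph{every} $\bm{z}$ obeying \eqref{eq:nbhd}: a lower bound $A(\bm{z}) \geq (1-\kappa)\|\bm{h}\|^2$ for some $\kappa<1$, and an upper bound $B(\bm{z}) \leq C'n\|\bm{h}\|^2$. Granting these, the right-hand side is at most $\bigl(1 - 4\mu(1-\kappa) + 4C'\mu^2 n\bigr)\|\bm{h}\|^2$; substituting $\mu = c_2/n$ produces the factor $1 - \tfrac{1}{n}\bigl(4c_2(1-\kappa) - 4C'c_2^2\bigr)$, and choosing $c_2$ small enough that $\rho := 4c_2(1-\kappa) - 4C'c_2^2 > 0$ closes the argument. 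To establish the two estimates I would first prove their population versions — exactly the bounds quoted after \eqref{eq:thought_ITWF_noiseless_rec} — by reducing to low-dimensional Gaussian integrals in the coordinates $\bm{a}_i^T\bm{x}$ and the component of $\bm{a}_i^T\bm{h}$, where the truncation event $\mathcal{E}_{1,t}^i$ supplies the crucial bound $|\bm{a}_i^T\bm{z}| \geq \alpha_z^{\lb}\|\bm{z}\|$ that tames the denominator. Here it is convenient to rewrite $1 + \tfrac{\bm{a}_i^T\bm{x}}{\bm{a}_i^T\bm{z}} = 2 - \tfrac{\bm{a}_i^T\bm{h}}{\bm{a}_i^T\bm{z}}$, isolating a leading term $2|\bm{a}_i^T\bm{h}|^2\mathbbm{1}_{\mathcal{E}_t^i}$ (concentrating near $2\|\bm{h}\|^2$ up to truncation loss) from a remainder that is higher order in $\|\bm{h}\|/\|\bm{x}\|$ and hence small throughout \eqref{eq:nbhd}; the constants $\zeta_1,\zeta_2,\zeta_3$ quantify precisely the mass removed by the truncation events.

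The main obstacle is upgrading these pointwise-in-$\bm{z}$ bounds to bounds holding \emph{uniformly} over the neighborhood, since the actual iterate $\bm{z}^{(t)}$ is correlated with the $\{\bm{a}_i\}$ through the path of the algorithm and so cannot be treated as fixed and independent of the data. I would handle this with an $\varepsilon$-net over the neighborhood: apply a Bernstein-type concentration inequality at each net point — legitimate because truncation renders the summands bounded, together with a high-probability bound $\|\bm{a}_i\|^2 \lesssim n$ for all $i$, which is exactly where the factor $n$ in the estimate for $B$ (and hence the scaling $\mu = \Theta(1/n)$) originates — and take a union bound, yielding the stated probability $1 - Cm\exp(-c_1 n)$. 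The delicate point, and the reason the appendix introduces the perturbed thresholds $\sqrt{1.01}\,\alpha_z^{\lb}$, $\sqrt{0.99}\,\alpha_z^{\ub}$ and $0.98\,\alpha_x$, is that the indicators $\mathbbm{1}_{\mathcal{E}_t^i}$ are discontinuous in $\bm{z}$, so one cannot simply interpolate between net points. The remedy is to sandwich $\mathbbm{1}_{\mathcal{E}_t^i}(\bm{z})$ between indicators of slightly enlarged and slightly shrunk events evaluated at the nearest net point; the quantities $\zeta_1,\zeta_2,\zeta_3$ then bound the discrepancy introduced by this discretization, and the neighborhood condition \eqref{eq:nbhd} is calibrated so that the combined truncation-plus-discretization loss keeps $\kappa$ strictly below $1$ while holding $B$ at order $n\|\bm{h}\|^2$.
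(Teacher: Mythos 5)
Your skeleton matches the paper's: your $A$ and $B$ are exactly the two sums in \eqref{eq:ITWF_noiseless_rec}, the reduction to a uniform lower bound on $A$ and a uniform upper bound $B \lesssim n\|\bm{h}\|^2$ is the paper's reduction to Lemmas~\ref{lem:1}--\ref{lem:3}, your treatment of $B$ (deterministic ratio bound from the truncation, $\|\bm{a}_i\|^2 \lesssim n$, covariance concentration) is Lemma~\ref{lem:3}, and your sandwiching of indicators between perturbed events plays the same role as the Lipschitz functions $\chi_z,\chi_x,\chi_h$ in Lemma~\ref{lem:1}. The problem lies in the one place where you deviate: the cross term.

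After rewriting $1 + \bm{a}_i^T\bm{x}/\bm{a}_i^T\bm{z} = 2 - \bm{a}_i^T\bm{h}/\bm{a}_i^T\bm{z}$, the remainder you must control is $\frac{1}{m}\sum_i \bigl|\bm{a}_i^T\bm{h}\bigr|^3 / \bigl|\bm{a}_i^T\bm{z}\bigr| \cdot \mathbbm{1}_{\mathcal{E}^i_t}$ (equivalently, in the paper's split, the term carrying the ratio $\bm{a}_i^T\bm{x}/\bm{a}_i^T\bm{z}$). You dismiss it as ``higher order in $\|\bm{h}\|/\|\bm{x}\|$ and hence small,'' to be absorbed by the same Bernstein-plus-net recipe as the other terms. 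That is true at the population level but the recipe fails empirically. On $\mathcal{E}^i_t$ each summand is bounded, but only by roughly $(\alpha_z^{\ub} + 1.01\alpha_x(1+\delta_{\init}))^3\|\bm{z}\|^2/\alpha_z^{\lb}$, i.e.\ of order $\|\bm{z}\|^2$, whereas its typical size is of order $\|\bm{h}\|^3/\|\bm{z}\|$: the worst-case-to-typical ratio is $(\|\bm{z}\|/\|\bm{h}\|)^3$, unbounded over the neighborhood \eqref{eq:nbhd}. Bernstein with a union bound over an $e^{\Theta(n)}$-size net and $m = c_0 n$ then forces a deviation term of order $\|\bm{z}\|^2\, n/m = \|\bm{z}\|^2/c_0$, which no universal constant $c_0$ can push below a small multiple of $\|\bm{h}\|^2$ once $\|\bm{h}\| \ll \|\bm{z}\|$. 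So the uniform lower bound on $A$ does not follow from your plan.

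The missing idea is the paper's Lemma~\ref{lem:2}: this term is dangerous only on a rare event. Since only a lower bound on $A$ is needed, the positive part of the ratio can be discarded, and the ratio is negative precisely when $\bm{a}_i^T\bm{x}$ and $\bm{a}_i^T\bm{z}$ have opposite signs, which forces $|\bm{a}_i^T\bm{h}| \geq \max\bigl(|\bm{a}_i^T\bm{z}|, |\bm{a}_i^T\bm{x}|\bigr) \geq \alpha_z^{\lb}\|\bm{z}\|$ on $\mathcal{E}^i_{1,t}$ --- a deviation of $\gamma\|\bm{h}\|$ with $\gamma = \alpha_z^{\lb}\|\bm{z}\|/\|\bm{h}\|$ large throughout \eqref{eq:nbhd}. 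The paper then invokes a bound, uniform in $\bm{h}$, on the empirical \emph{frequency} of such deviations (Lemma 6 of \cite{CheCan15}), of order $\exp(-c\gamma^2) + \epsilon/\gamma^2$, and multiplies it by the deterministic per-term bound above; the product is $\lesssim \bigl(\|\bm{h}\|^2/\|\bm{z}\|^2 + \epsilon\bigr)\|\bm{h}\|^2$. Indeed, the third, odd-looking term in \eqref{eq:nbhd} is calibrated exactly so that this product is at most $0.01\|\bm{h}\|^2$. Without this rare-event argument (or an equivalent substitute for the heavy-tailed cubic sum), your proof does not close; with it, your $2 - \bm{a}_i^T\bm{h}/\bm{a}_i^T\bm{z}$ decomposition works just as well as the paper's $1 + \bm{a}_i^T\bm{x}/\bm{a}_i^T\bm{z}$ split.
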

\begin{proof}
%The proposition follows by Lemma~\ref{lem:1}, \ref{lem:2} and \ref{lem:3} that are given below, and the arguments 
Let $\bm{h}$ denote $\bm{z}^{(t)} - \bm{x}$, and $\mathcal{E}^{i}_t = \mathcal{E}^{i}_{1,t} \cap \mathcal{E}^{i}_{3}$ denote the overall truncation. 
The ITWF update \eqref{eq:inc_trunc_grad}, after some simple algebraic manipulations, gives us 
\begin{IEEEeqnarray}{rCl}
\mathbb{E}_{i_{t}}\left[\dist^2(\bm{z}^{(t+1)},\bm{x})\right] & = & \|\bm{h}\|^2 - \frac{4\mu}{m} \sum_{i=1}^m |\bm{a}_i^T\bm{h}|^2\left(1 + \frac{\bm{a}_i^T\bm{x}}{\bm{a}_i^T\bm{z}^{(t)}}\right)\mathbbm{1}_{\mathcal{E}^{i}_{t}}\nonumber\\
&& \quad +\> \frac{4\mu^2}{m}\sum_{i=1}^m \|\bm{a}_i\|^2|\bm{a}_i^T\bm{h}|^2\left(1 + \frac{\bm{a}_i^T\bm{x}}{\bm{a}_i^T\bm{z}^{(t)}}\right)^2 \mathbbm{1}_{\mathcal{E}^{i}_{t}}\label{eq:ITWF_noiseless_rec}
\end{IEEEeqnarray}

Lemmas 1, 2 and 3 show that for appropriate $\zeta_1$, $\zeta_2$ and $\zeta_3$ and any small $\delta>0$, the following two relations hold \emph{simultaneously} for all $\bm{z}$, $\bm{h}$ and $\bm{x}$ with high probability:
\begin{IEEEeqnarray*}{l}
\frac{1}{m}\sum_{i=1}^m |\bm{a}_i^T\bm{h}|^2\left(1 + \frac{\bm{a}_i^T\bm{x}}{\bm{a}_i^T\bm{z}}\right)\mathbbm{1}_{\mathcal{E}^{i}_{t}} \geq (0.99 - \zeta_1 - \zeta_2 - \zeta_3 - 3\delta)\|\bm{h}\|^2,
\end{IEEEeqnarray*}
and 
\begin{IEEEeqnarray*}{l}
\frac{1}{m}\sum_{i=1}^m \|\bm{a}_i\|^2|\bm{a}_i^T\bm{h}|^2\left(1 + \frac{\bm{a}_i^T\bm{x}}{\bm{a}_i^T\bm{z}}\right)^2 \mathbbm{1}_{\mathcal{E}^{i}_{t}} \leq c_4 n (1+\delta) \|\bm{h}\|^2,
\end{IEEEeqnarray*}
if $m$ is $O(n)$ and $c_4$ is a constant. The proofs mainly involve the application of concentration bounds and $\epsilon$-net arguments. Substituting these relations in \eqref{eq:ITWF_noiseless_rec} and choosing $\mu$ to be say $\frac{0.99 - \zeta_1 - \zeta_2 - \zeta_3 - 3\delta}{2c_4(1+\delta)n}$, we get that 
$$\mathbb{E}_{i_{t}}\left[\dist^2(\bm{z}^{(t+1)},\bm{x})\right] \leq \left(1 - \frac{\rho}{n}\right)\dist^2(\bm{z}^{(t)},\bm{x}).$$
\end{proof}

\begin{lemma}\label{lem:1}
For any $\delta > 0$, there exist universal constants $C, c_0, c_1>0$ such that if $m > c_1n\delta^{-2}\log (1/\delta)$, then
$$\frac{1}{m} \sum_{i=1}^m |\bm{a}_i^T\bm{h}|^2\mathbbm{1}_{\mathcal{E}^{i}_{1,t}} \mathbbm{1}_{\mathcal{E}^{i}_{3}}  \geq (1 - \zeta_1 - \zeta_2 - \zeta_3 - 2\delta )\|\bm{h}\|^2,$$
with probability $1 - C\exp(-c_0m\delta^2)$, simultaneously for all non-zero vectors $\bm{h},\bm{z},\bm{x}\in\mathbb{R}^n$.
\end{lemma}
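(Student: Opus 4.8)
The plan is to reduce the statement to two pieces: a pointwise lower bound (for fixed $\bm{h},\bm{z},\bm{x}$) on the expectation of the summand, and a uniform concentration statement that upgrades this to hold simultaneously over all vectors. By homogeneity the whole inequality scales with $\|\bm{h}\|^2$, so I would normalize $\|\bm{h}\|=1$ throughout. The key structural observation is that the summand $|\bm{a}_i^T\bm{h}|^2\mathbbm{1}_{\mathcal{E}^i_{1,t}}\mathbbm{1}_{\mathcal{E}^i_{3}}$ depends on $\bm{a}_i$ only through its projection onto $\mathrm{span}\{\bm{h},\bm{z},\bm{x}\}$, a subspace of dimension at most three, so every Gaussian computation reduces to a low-dimensional one even though the net must still range over $\mathbb{R}^n$.

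First I would establish the pointwise bound. To keep the summand bounded, which is what makes the subsequent concentration cheap, I would lower bound the sum by inserting the extra truncation indicator $\mathbbm{1}_{\{|\bm{a}_i^T\bm{h}|\leq\gamma\|\bm{h}\|\}}$, which only decreases each nonnegative term. Working with this truncated summand and using $1-\mathbbm{1}_A\mathbbm{1}_B\mathbbm{1}_C\leq\mathbbm{1}_{A^c}+\mathbbm{1}_{B^c}+\mathbbm{1}_{C^c}$ together with $\mathbb{E}[|\bm{a}^T\bm{h}|^2]=\|\bm{h}\|^2$, the pointwise expectation is at least
\[
\|\bm{h}\|^2-\mathbb{E}\!\left[|\bm{a}^T\bm{h}|^2\mathbbm{1}_{(\mathcal{E}_{1,t})^c}\right]-\mathbb{E}\!\left[|\bm{a}^T\bm{h}|^2\mathbbm{1}_{(\mathcal{E}_{3})^c}\right]-\mathbb{E}\!\left[|\bm{a}^T\bm{h}|^2\mathbbm{1}_{\{|\bm{a}^T\bm{h}|>\gamma\|\bm{h}\|\}}\right].
\]
Each penalty is bounded by decoupling. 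For the first, I decompose $\bm{h}=\bm{h}_\parallel+\bm{h}_\perp$ into components parallel and orthogonal to $\bm{z}$, so that $\bm{a}^T\bm{h}_\perp$ is independent both of $\bm{a}^T\bm{h}_\parallel$ and of $\mathcal{E}_{1,t}$ (which involves only $\bm{a}^T\bm{z}$); the cross term vanishes and the remainder equals $\|\bm{h}_\parallel\|^2\,\mathbb{E}[\xi^2\mathbbm{1}_{|\xi|\notin(\alpha_z^{\lb},\alpha_z^{\ub})}]+\|\bm{h}_\perp\|^2\,\mathbb{P}[|\xi|\notin(\alpha_z^{\lb},\alpha_z^{\ub})]$, which is at most $\zeta_1\|\bm{h}\|^2$ because $\zeta_1$ is the maximum of the two Gaussian moments and $\|\bm{h}_\parallel\|^2+\|\bm{h}_\perp\|^2=\|\bm{h}\|^2$. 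The identical decomposition along $\bm{x}$ yields $\zeta_2\|\bm{h}\|^2$ for the $\mathcal{E}_3$ term, after replacing the empirical normalizer $\tfrac1m\sum_j y_j$ by $\|\bm{x}\|^2$ up to a $1\pm0.01$ factor (valid on a high-probability event, and the origin of the $0.98$ slack in $\zeta_2$), and the last term is exactly $\|\bm{h}\|^2\,\mathbb{E}[\xi^2\mathbbm{1}_{|\xi|>\gamma}]\leq\zeta_3\|\bm{h}\|^2$. Hence the pointwise expectation is at least $(1-\zeta_1-\zeta_2-\zeta_3)\|\bm{h}\|^2$.

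Next I would make this uniform. Since the truncated summand is bounded by $\gamma^2\|\bm{h}\|^2$, Bernstein's inequality gives, for each fixed triple $(\bm{h},\bm{z},\bm{x})$, a deviation of the empirical average from its mean of at most $\delta\|\bm{h}\|^2$ except with probability $C\exp(-c_0m\delta^2)$. I would then take an $\epsilon$-net with $\epsilon\asymp\delta$ over the unit directions of $\bm{h},\bm{z},\bm{x}$, of cardinality $(C/\epsilon)^{O(n)}$, and union bound; the requirement $n\log(1/\epsilon)\ll m\delta^2$ is exactly what forces $m>c_1 n\delta^{-2}\log(1/\delta)$. Transferring from a net point to a nearby true point is the delicate step, because the indicators are discontinuous in $\bm{z},\bm{x}$: I would sandwich the true truncation events between slightly shrunk and enlarged threshold events, which is precisely why $\zeta_1,\zeta_2,\zeta_3$ are defined with the inflated thresholds $\sqrt{1.01}\alpha_z^{\lb},\sqrt{0.99}\alpha_z^{\ub},0.98\alpha_x,\sqrt{0.99}\gamma$ rather than the bare ones, the extra slack absorbing the $O(\epsilon)$ perturbation of each constraint.

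The main obstacle is this last net-transfer step rather than either Gaussian computation. One must verify that replacing $\bm{z},\bm{x}$ by nearby net points only shrinks the good (truncation) event, so that the net-point quantity genuinely lower bounds the true quantity, while simultaneously checking that the inflated thresholds leave the penalties $\zeta_1,\zeta_2,\zeta_3$ essentially unchanged. Controlling the perturbation of the ratio $|\bm{a}^T\bm{z}|/\|\bm{z}\|$ and of the data-dependent normalizer appearing in $\mathcal{E}_3$ uniformly over the net, with a slack that is a true constant (not growing with $n$) so that it folds into the $2\delta$ loss, is where the careful bookkeeping lies.
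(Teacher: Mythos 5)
Your overall architecture coincides with the paper's: insert the extra truncation $\mathbbm{1}_{\{|\bm{a}_i^T\bm{h}|\leq\gamma\|\bm{h}\|\}}$, lower bound the pointwise expectation via $\mathbbm{1}_A\mathbbm{1}_B\mathbbm{1}_C\geq \mathbbm{1}_A+\mathbbm{1}_B+\mathbbm{1}_C-2$ (your parallel/orthogonal decoupling is exactly the computation the paper imports from equation (120) of the TWF paper), replace the empirical normalizer in $\mathcal{E}_3$ by $\|\bm{x}\|^2$ up to a $1\pm 0.01$ factor via spectral concentration, apply Bernstein at fixed vectors, and union bound over an $\epsilon$-net with $\epsilon\asymp\delta$. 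The gap is in the step you yourself flag as delicate: the transfer from net points to arbitrary points. Your plan requires the shrunk-event sandwich $\mathbbm{1}_{\{|\bm{a}_i^T\bm{z}_0|\in I_{\text{shrunk}}\}}\leq\mathbbm{1}_{\{|\bm{a}_i^T\bm{z}|\in I\}}$ to hold for \emph{every} index $i$, i.e. you need $|\bm{a}_i^T(\bm{z}-\bm{z}_0)|$ below a constant threshold gap for all $i$ simultaneously. But since the lemma must hold uniformly over all $\bm{z}$ (in the application $\bm{z}^{(t)}$ depends on the data), the worst case is $\bm{z}-\bm{z}_0$ aligned with some $\bm{a}_i$, giving $|\bm{a}_i^T(\bm{z}-\bm{z}_0)|\approx\|\bm{a}_i\|\epsilon\approx\sqrt{n}\,\epsilon$. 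Constant slack therefore forces $\epsilon\lesssim 1/\sqrt{n}$, which inflates the net to cardinality $\exp(\Theta(n\log n))$ and the sample requirement to $m\gtrsim n\delta^{-2}\log n$ --- losing exactly the claimed $m> c_1 n\delta^{-2}\log(1/\delta)$ (and, downstream, the paper's optimal $O(n)$ sample complexity). So the transfer as you describe it cannot be executed with a constant slack and $\epsilon\asymp\delta$.

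The paper's device that dissolves this is not a sandwich of indicators but a replacement of them by surrogate functions $\chi_z,\chi_x,\chi_h$ (equal to $1$ on the shrunk interval and $0$ off the original one) built so that $\chi_z(\sqrt{\tau})$, $\chi_x(\sqrt{\tau})$ and $\tau\chi_h(\sqrt{\tau})$ are $O(1)$-Lipschitz \emph{as functions of the squared variable}. The net-transfer error is then bounded not per index but in average,
\[
\Bigl|\tfrac{1}{m}\textstyle\sum_i (\cdot) - \tfrac{1}{m}\sum_i (\cdot)_0\Bigr| \lesssim \tfrac{1}{m}\textstyle\sum_i \Bigl(\bigl||\bm{a}_i^T\bm{h}|^2-|\bm{a}_i^T\bm{h}_0|^2\bigr| + \bigl||\bm{a}_i^T\bm{z}|^2-|\bm{a}_i^T\bm{z}_0|^2\bigr| + \bigl||\bm{a}_i^T\bm{x}|^2-|\bm{a}_i^T\bm{x}_0|^2\bigr|\Bigr),
\]
and each average is $O(\epsilon)$ by Cauchy--Schwarz together with $\bigl\|\frac{1}{m}\sum_i\bm{a}_i\bm{a}_i^T-I\bigr\|\leq\delta$, with no per-index control whatsoever; the inflated thresholds in $\zeta_1,\zeta_2,\zeta_3$ exist to accommodate the supports of these surrogates. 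Your sandwich could alternatively be rescued by treating exceptional indices separately: the fraction of $i$ with $|\bm{a}_i^T(\bm{z}-\bm{z}_0)|>s$ is at most $(1+\delta)\epsilon^2/s^2$ by Chebyshev plus the same spectral bound, and each such term contributes at most $\gamma^2\|\bm{h}\|^2$ thanks to the $\gamma$-truncation. Either mechanism supplies the missing idea; as written, your proposal contains neither, and the step where you locate the "careful bookkeeping" is precisely where it fails.
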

\begin{proof}
The proof follows a similar path as Lemma 4 in \cite{CheCan15}. We can restrict attention to unit norm $\bm{h}$, $\bm{z}$ and $\bm{x}$. We first assume that $\bm{z}$, $\bm{h}$ and $\bm{x}$ are independent from $\{\bm{a}_i\}_i$. Later, by using an $\epsilon$-net argument, we extend it to all unit norm vectors $\bm{z}$, $\bm{h}$ and $\bm{x}$. 

First, we note that with high probability, we have
\begin{IEEEeqnarray*}{rCl}
\mathbbm{1}_{\mathcal{E}^{i}_{3}} = \mathbbm{1}_{\left\{ \frac{y_i}{\frac{1}{m}\sum_{i=1}^my_i } \leq \alpha^2_x\right\}} & \geq & \mathbbm{1}_{\left\{ \frac{|\bm{a}_i^T\bm{x}|}{\|\bm{x}\| }\leq \alpha_x (1-0.01) \right\}}\\
& = & \mathbbm{1}_{ \left\{\frac{|\bm{a}_i^T\bm{x}|}{\|\bm{x}\| }\leq 0.99\alpha_x \right\} },
\end{IEEEeqnarray*}
where the  inequality follows from Theorem 5.39 in \cite{Ver10}, which says that \begin{equation}\label{eq:ver539}\left\|\frac{1}{m}\sum_{i=1}^m \bm{a}_i\bm{a}_i^T - I\right\| \leq \delta,\end{equation}
with probability $1 - C\exp(-c_1m\delta^2)$ if $m\geq c_0n\delta^{-2}$. For convenience, let us denote $0.99\alpha_x$ by $\widetilde{\alpha}_x$. In the remainder of this proof and the rest of the proofs for the noiseless case, we will use the established lower bound
$$\mathbbm{1}_{\mathcal{E}^{i}_{3}} \geq \mathbbm{1}_{ \left\{\frac{|\bm{a}_i^T\bm{x}|}{\|\bm{x}\| }\leq \widetilde{\alpha}_x \right\} }$$
implicitly whenever required.

Now define the functions
\begin{IEEEeqnarray*}{c}
\chi_z(\tau) = \begin{cases}1, & \text{ if } |\tau| \in [ \sqrt{1.01}\alpha_z^{\lb}, \sqrt{0.99}\alpha_z^{\ub} ] \\
100\left(1 - \left(\frac{\tau}{\alpha_z^{\ub}}\right)^2 \right), & \text{ if }|\tau| \in [ \sqrt{0.99}\alpha_z^{\ub}, \alpha_z^{\ub} ]\\
100\left(\left(\frac{\tau}{\alpha_z^{\lb}}\right)^2 -1\right), & \text{ if }|\tau| \in [\alpha_z^{\lb}, \sqrt{1.01}\alpha_z^{\lb} ]\\
0, & \text{ else,} \end{cases} 
\end{IEEEeqnarray*}
and
\begin{IEEEeqnarray*}{c}
\chi_x(\tau) = \begin{cases}1, & \text{ if } |\tau| \leq \sqrt{0.99}\widetilde{\alpha}_x \\
100\left(1 - \left(\frac{\tau}{\widetilde{\alpha}_x}\right)^2 \right), & \text{ if }|\tau| \in [ \sqrt{0.99}\widetilde{\alpha}_x, \widetilde{\alpha}_x ]\\
0, & \text{ else,} \end{cases} 
\end{IEEEeqnarray*}
and
\begin{IEEEeqnarray*}{c}
\chi_h(\tau) = \begin{cases}1, & \text{ if } |\tau| \leq \sqrt{0.99}\gamma \\
100\left(1 - \left(\frac{\tau}{\gamma}\right)^2 \right), & \text{ if }|\tau| \in [ \sqrt{0.99}\gamma, \gamma ]\\
0, & \text{ else.} \end{cases} 
\end{IEEEeqnarray*}

Since $\bm{h}$, $\bm{x}$ and $\bm{z}$ are assumed to be unit vectors, we have $$  \mathbbm{1}_{\mathcal{E}^{i}_{1,t}} \mathbbm{1}_{\mathcal{E}^{i}_{3}} \geq\mathbbm{1}_{\mathcal{E}^{i}_{1,t}} \mathbbm{1}_{\mathcal{E}^{i}_{3}} \mathbbm{1}_{\mathcal{E}^{i}_{4,t}} \geq \chi_z(\bm{a}_i^T\bm{z})\chi_x(\bm{a}_i^T\bm{x})\chi_h(\bm{a}_i^T\bm{h}) \geq 0,$$ 
where 
$\mathcal{E}^{i}_{4,t}$ is defined to be the event $\{|\bm{a}_i^T\bm{h}| \leq \gamma\|\bm{h}\|\}$, and $\gamma$ can be chosen to be 5. This means that 
\begin{equation}\label{eq:lem1_1}\frac{1}{m} \sum_{i=1}^m |\bm{a}_i^T\bm{h}|^2\mathbbm{1}_{\mathcal{E}^{i}_{1,t}} \mathbbm{1}_{\mathcal{E}^{i}_{3}}\geq \frac{1}{m} \sum_{i=1}^m |\bm{a}_i^T\bm{h}|^2\chi_z(\bm{a}_i^T\bm{z})\chi_x(\bm{a}_i^T\bm{x})\chi_h(\bm{a}_i^T\bm{h}).\end{equation}
The reason for introducing these functions is that their $O(1)$-Lipschitz continuity allows us to apply the $\epsilon$-net argument later in the proof. The event $\mathcal{E}^{i}_{4,t}$ allows us to discard atypically large terms in the summation, which is useful while applying the $\epsilon$-net argument.

We now lower bound the RHS of the above inequality. The expectation of each term in the summation can be lower bounded as follows:
\begin{IEEEeqnarray*}{l}
\mathbb{E}\left[|\bm{a}_i^T\bm{h}|^2\chi_z(\bm{a}_i^T\bm{z})\chi_x(\bm{a}_i^T\bm{x})\chi_h(\bm{a}_i^T\bm{h})\right]\\ 
\quad \geq  \mathbb{E}\left[|\bm{a}_i^T\bm{h}|^2\chi_z(\bm{a}_i^T\bm{z})\right] + \mathbb{E}\left[|\bm{a}_i^T\bm{h}|^2\chi_x(\bm{a}_i^T\bm{x})\right] + \mathbb{E}\left[|\bm{a}_i^T\bm{h}|^2\chi_h(\bm{a}_i^T\bm{h})\right] - 2\mathbb{E}\left[|\bm{a}_i^T\bm{h}|^2\right]\\
\quad \geq (1 - \zeta_1)\|\bm{h}\|^2 + (1 - \zeta_2)\|\bm{h}\|^2 + (1 - \zeta_3)\|\bm{h}\|^2 - 2\|\bm{h}\|^2 \\ 
\quad = (1 - \zeta_1 - \zeta_2 - \zeta_3)\|\bm{h}\|^2,
\end{IEEEeqnarray*}
where these inequalities follow by similar arguments as equation (120) in \cite{CheCan15}.

Observing that $|\bm{a}_i^T\bm{h}|^2\chi_z(\bm{a}_i^T\bm{z})\chi_x(\bm{a}_i^T\bm{x})\chi_h(\bm{a}_i^T\bm{h})$ is a sub-exponential random variable with sub-exponential norm $O(\|\bm{h}\|^2)$, we can use Proposition 5.16 from \cite{Ver10} to get that with probability $1 - \exp(-\Omega(\delta^2 m))$, 
$$\frac{1}{m} \sum_{i=1}^m |\bm{a}_i^T\bm{h}|^2\chi_z(\bm{a}_i^T\bm{z})\chi_x(\bm{a}_i^T\bm{x})\chi_h(\bm{a}_i^T\bm{h}) \geq (1 - \zeta_1 - \zeta_2 - \zeta_3 - \delta)\|\bm{h}\|^2.$$

We now construct an $\epsilon$-net $\mathcal{N}_\epsilon$, such that for any $(\bm{h},\bm{z},\bm{x})$ with $\|\bm{h}\| = \|\bm{z}\| = \|\bm{x}\| = 1$, there exists $(\bm{h}_0,\bm{z}_0,\bm{x}_0)\in\mathcal{N}_{\epsilon}$ such that $\|\bm{h}-\bm{h}_0\|\leq \epsilon$, $\|\bm{z}-\bm{z}_0\|\leq \epsilon$ and $\|\bm{x}-\bm{x}_0\|\leq \epsilon$, and $|\mathcal{N}_\epsilon| \leq \left(1 + \frac{2}{\epsilon}\right)^{3n}.$
Taking a union bound over this set gives us that
$$\frac{1}{m} \sum_{i=1}^m |\bm{a}_i^T\bm{h}_0|^2\chi_z(\bm{a}_i^T\bm{z}_0)\chi_x(\bm{a}_i^T\bm{x}_0)\chi_h(\bm{a}_i^T\bm{h}_0) \geq (1 - \zeta_1 - \zeta_2 - \zeta_3 - \delta)\|\bm{h}_0\|^2,$$
holds for all $(\bm{h}_0,\bm{z}_0,\bm{x}_0)\in\mathcal{N}_\epsilon$ with probability a least $1 - \left(1 + \frac{2}{\epsilon}\right)^{3n}\exp(-\Omega(\delta^2 m))$.

Now, since $\chi_z(\sqrt{\tau})$, $\chi_x(\sqrt{\tau})$ and $\tau\chi_h(\sqrt{\tau})$ are Lipschitz functions with Lipschitz constant $O(1)$, we get that for any unit vector pair $\bm{h}$, $\bm{z}$ and $\bm{x}$, 
\begin{IEEEeqnarray*}{l}
\left| |\bm{a}_i^T\bm{h}|^2\chi_z(\bm{a}_i^T\bm{z})\chi_x(\bm{a}_i^T\bm{x})\chi_h(\bm{a}_i^T\bm{h}) - |\bm{a}_i^T\bm{h}_0|^2\chi_z(\bm{a}_i^T\bm{z}_0)\chi_x(\bm{a}_i^T\bm{x}_0)\chi_h(\bm{a}_i^T\bm{h}_0)\right| \\
\quad \leq |\chi_z(\bm{a}_i^T\bm{z}_0)\chi_x(\bm{a}_i^T\bm{x}_0)|\cdot\left||\bm{a}_i^T\bm{h}|^2\chi_h(\bm{a}_i^T\bm{h}) - |\bm{a}_i^T\bm{h}_0|^2\chi_h(\bm{a}_i^T\bm{h}_0)\right|\\
\quad\quad +\> |\bm{a}_i^T\bm{h}_0|^2|\chi_h(\bm{a}_i^T\bm{h}_0)\chi_x(\bm{a}_i^T\bm{x}_0)|\cdot \left|\chi_z(\bm{a}_i^T\bm{z}) - \chi_z(\bm{a}_i^T\bm{z}_0)\right|\\
\quad\quad +\> |\bm{a}_i^T\bm{h}_0|^2\chi_h(\bm{a}_i^T\bm{h}_0)|\chi_z(\bm{a}_i^T\bm{z}_0)|\cdot \left|\chi_x(\bm{a}_i^T\bm{x}) - \chi_x(\bm{a}_i^T\bm{x}_0)\right|\\
\quad\lesssim \left||\bm{a}_i^T\bm{h}|^2 - |\bm{a}_i^T\bm{h}_0|^2\right| + \left||\bm{a}_i^T\bm{z}|^2 - |\bm{a}_i^T\bm{z}_0|^2\right| + \left||\bm{a}_i^T\bm{x}|^2 - |\bm{a}_i^T\bm{x}_0|^2\right|.
\end{IEEEeqnarray*}
%where we have used that $\|\bm{a}_i\|^2 \leq 6n$ for all $1\leq i\leq m$ with probability at least $1 - m\exp(-1.5n)$.

Hence, there exists a universal constant $c_3$ such that
\begin{IEEEeqnarray*}{l}
\left|\frac{1}{m} \sum_{i=1}^m |\bm{a}_i^T\bm{h}|^2\chi_z(\bm{a}_i^T\bm{z})\chi_x(\bm{a}_i^T\bm{x}) - \frac{1}{m} \sum_{i=1}^m |\bm{a}_i^T\bm{h}_0|^2\chi_z(\bm{a}_i^T\bm{z}_0)\chi_x(\bm{a}_i^T\bm{x}_0) \right| \\
\quad\lesssim \frac{1}{m}\sum_{i=1}^m\left(\left||\bm{a}_i^T\bm{h}|^2 - |\bm{a}_i^T\bm{h}_0|^2\right| + \left||\bm{a}_i^T\bm{z}|^2 - |\bm{a}_i^T\bm{z}_0|^2\right|+ \left||\bm{a}_i^T\bm{x}|^2 - |\bm{a}_i^T\bm{x}_0|^2\right|\right)\\
\quad\leq 3c_3 \epsilon,
\end{IEEEeqnarray*}
where the last inequality follows by invoking Lemma 1 and Lemma 2 from \cite{CheCan15}, assuming $\epsilon < 1/2$. By choosing $\epsilon = \frac{\delta}{3c_3}$ and recalling \eqref{eq:lem1_1}, we get the lemma.
\end{proof}

\begin{lemma}\label{lem:2}
Consider any $\bm{z}\in\mathbb{R}^n$, and $\bm{h}:=\bm{z}-\bm{x}$ such that
$$\frac{\|\bm{h}\|}{\|\bm{z\|}} \leq \min\left\{\delta_{\emph{init} }, \frac{\alpha_z^{\emph{lb}}}{6},\frac{\sqrt{98(\alpha_z^{\emph{lb}})^5}}{\sqrt{(\alpha_z^{\ub} + 1.01\alpha_x (1 + \delta_{\init}))^3}}  \right\}.$$
Then, for any $\delta \in (0,1)$, there exist universal constants $C,c_0,c_1>0$ such that if $m\geq c_1n\delta^{-2}\log(1/\delta)$, it holds that
$$\frac{1}{m}\sum_{i=1}^m |\bm{a}_i^T\bm{h}|^2 \frac{\bm{a}_i^T\bm{x}}{\bm{a}_i^T\bm{z}}\mathbbm{1}_{\mathcal{E}^{i}_{t}} \geq -(0.01 + \delta)\|\bm{h}\|^2,$$
with probability at least $1 - C\exp(-c_0\delta^2m)$.
\end{lemma}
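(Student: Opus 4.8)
The plan is to bound the quantity $\frac{1}{m}\sum_{i=1}^m |\bm{a}_i^T\bm{h}|^2 \frac{\bm{a}_i^T\bm{x}}{\bm{a}_i^T\bm{z}}\mathbbm{1}_{\mathcal{E}^{i}_{t}}$ from below by separating the problematic sign-varying factor $\frac{\bm{a}_i^T\bm{x}}{\bm{a}_i^T\bm{z}}$ into a main term and a controllable remainder. First I would write $\bm{a}_i^T\bm{x} = \bm{a}_i^T\bm{z} - \bm{a}_i^T\bm{h}$, so that $\frac{\bm{a}_i^T\bm{x}}{\bm{a}_i^T\bm{z}} = 1 - \frac{\bm{a}_i^T\bm{h}}{\bm{a}_i^T\bm{z}}$. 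The term coming from the $1$ contributes $\frac{1}{m}\sum_i |\bm{a}_i^T\bm{h}|^2\mathbbm{1}_{\mathcal{E}_t^i}\geq 0$, which is harmless for a lower bound. The remaining task is to upper bound in magnitude the term $\frac{1}{m}\sum_i |\bm{a}_i^T\bm{h}|^2\frac{\bm{a}_i^T\bm{h}}{\bm{a}_i^T\bm{z}}\mathbbm{1}_{\mathcal{E}_t^i}$, whose absolute value I want to show is at most $(0.01+\delta)\|\bm{h}\|^2$.

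The key observation is that on the event $\mathcal{E}^i_{1,t}$ we have $|\bm{a}_i^T\bm{z}|\geq \alpha_z^{\lb}\|\bm{z}\|$, so the dangerous denominator is bounded away from zero, giving the pointwise estimate
\begin{equation*}
\left|\frac{|\bm{a}_i^T\bm{h}|^2\,\bm{a}_i^T\bm{h}}{\bm{a}_i^T\bm{z}}\right|\mathbbm{1}_{\mathcal{E}_{1,t}^i}\leq \frac{|\bm{a}_i^T\bm{h}|^3}{\alpha_z^{\lb}\|\bm{z}\|}.
\end{equation*}
Thus it suffices to control $\frac{1}{m}\sum_i |\bm{a}_i^T\bm{h}|^3\mathbbm{1}_{\mathcal{E}^i_t}$, and here the truncation events become essential: on $\mathcal{E}^i_{1,t}\cap\mathcal{E}^i_3$ (together with the implicit cap $|\bm{a}_i^T\bm{h}|\leq\gamma\|\bm{h}\|$ from $\mathcal{E}^i_{4,t}$ introduced in Lemma~\ref{lem:1}) one of the three linear forms is bounded, so I can trade one factor of $|\bm{a}_i^T\bm{h}|$ for $\gamma\|\bm{h}\|$ and reduce to controlling $\frac{1}{m}\sum_i|\bm{a}_i^T\bm{h}|^2$, which concentrates near $\|\bm{h}\|^2$ by the same Theorem~5.39 of \cite{Ver10} used in Lemma~\ref{lem:1}. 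This is where the explicit neighborhood hypothesis on $\frac{\|\bm{h}\|}{\|\bm{z}\|}$ enters: the third term in the minimum in the hypothesis, $\frac{\sqrt{98(\alpha_z^{\lb})^5}}{\sqrt{(\alpha_z^{\ub}+1.01\alpha_x(1+\delta_{\init}))^3}}$, is precisely what is needed so that the ratio $\frac{\gamma\|\bm{h}\|}{\alpha_z^{\lb}\|\bm{z}\|}$, after inserting the bounds on $|\bm{a}_i^T\bm{z}|$ and $|\bm{a}_i^T\bm{x}|$ valid on the truncation set, drops below the target constant $0.01$.

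To make the bound uniform over all admissible $(\bm{h},\bm{z},\bm{x})$, I would again invoke the $\epsilon$-net argument exactly as in Lemma~\ref{lem:1}: reduce to unit vectors, establish the estimate for a fixed independent triple using a sub-exponential concentration inequality (Proposition~5.16 of \cite{Ver10}), take a union bound over a net of cardinality $(1+2/\epsilon)^{3n}$, and finally transfer to arbitrary vectors using the $O(1)$-Lipschitz continuity of the smoothed truncation functions $\chi_z,\chi_x,\chi_h$ together with Lemmas~1 and~2 of \cite{CheCan15}. I expect the main obstacle to be the denominator $\bm{a}_i^T\bm{z}$: unlike in Lemma~\ref{lem:1}, the integrand is not itself Lipschitz in the naive sense because of the division, so the careful part is to verify that after restricting to $\mathcal{E}^i_{1,t}$ the resulting expression, expressed through the smooth surrogates, remains Lipschitz with an $O(1)$ constant so that the net argument closes and the constant in front of $\|\bm{h}\|^2$ genuinely stays below $0.01+\delta$.
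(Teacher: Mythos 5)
Your proposal has a genuine gap, and it sits exactly at the step you flag as the heart of the argument. After reducing to an upper bound on $\frac{1}{m}\sum_i \frac{|\bm{a}_i^T\bm{h}|^3}{|\bm{a}_i^T\bm{z}|}\mathbbm{1}_{\mathcal{E}^i_t}$, you propose to ``trade one factor of $|\bm{a}_i^T\bm{h}|$ for $\gamma\|\bm{h}\|$'' by invoking the cap $\mathcal{E}^i_{4,t}=\{|\bm{a}_i^T\bm{h}|\leq\gamma\|\bm{h}\|\}$ from Lemma~\ref{lem:1}. But $\mathcal{E}^i_{4,t}$ is not part of the algorithm's truncation $\mathcal{E}^i_t=\mathcal{E}^i_{1,t}\cap\mathcal{E}^i_3$: in Lemma~\ref{lem:1} it could be inserted for free only because a sum of \emph{nonnegative} terms was being bounded from \emph{below} (adding an indicator can only decrease such a sum). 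Here you need an \emph{upper} bound, so inserting the extra indicator goes in the illegitimate direction --- the indices with $|\bm{a}_i^T\bm{h}|>\gamma\|\bm{h}\|$ are precisely the ones you must control, and the actual truncation does not exclude them. Indeed, the paper emphasizes (Section~2.2) that, unlike TWF's event $\mathcal{E}^i_{2,t}$, the ITWF truncation cannot control $|\bm{a}_i^T\bm{h}|$; all that $\mathcal{E}^i_{1,t}\cap\mathcal{E}^i_3$ yields, via the triangle inequality, is $|\bm{a}_i^T\bm{h}|\leq(\alpha_z^{\ub}+1.01\alpha_x(1+\delta_{\init}))\|\bm{z}\|$, a multiple of $\|\bm{z}\|$, not of $\|\bm{h}\|$. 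Using that bound, your argument produces a constant of roughly $(\alpha_z^{\ub}+1.01\alpha_x(1+\delta_{\init}))/\alpha_z^{\lb}\approx 35$ in front of $\|\bm{h}\|^2$, not $0.01$. Even if you instead tried unconditional third-moment concentration (itself delicate: $|\bm{a}_i^T\bm{h}|^3$ is not sub-exponential, so Proposition~5.16 of \cite{Ver10} does not apply directly), the best you could hope for is roughly $\frac{\mathbb{E}|\xi|^3}{\alpha_z^{\lb}}\cdot\frac{\|\bm{h}\|}{\|\bm{z}\|}\|\bm{h}\|^2\approx 5.3\,\frac{\|\bm{h}\|}{\|\bm{z}\|}\|\bm{h}\|^2$, and since the stated neighborhood only guarantees $\|\bm{h}\|/\|\bm{z}\|\approx 0.014$, this lands near $0.076\|\bm{h}\|^2$, which exceeds the required $(0.01+\delta)\|\bm{h}\|^2$. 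The failure is structural: any bound that puts absolute values inside the sum and treats all indices alike is linear in $\|\bm{h}\|/\|\bm{z}\|$, which is too weak for the lemma's constants.

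The idea you are missing is the paper's sign/rare-event argument. Since terms with $\frac{\bm{a}_i^T\bm{x}}{\bm{a}_i^T\bm{z}}\geq 0$ only help the lower bound, the paper restricts to $\mathcal{D}^i_t=\bigl\{\frac{\bm{a}_i^T\bm{x}}{\bm{a}_i^T\bm{z}}\leq 0\bigr\}$ and observes that $\mathcal{D}^i_t$ is equivalent to $\{|\bm{a}_i^T\bm{h}|\geq|\bm{a}_i^T\bm{z}|\}\cap\{|\bm{a}_i^T\bm{h}|\geq|\bm{a}_i^T\bm{x}|\}$; on the truncation this forces $|\bm{a}_i^T\bm{h}|\geq\alpha_z^{\lb}\|\bm{z}\|\gg\|\bm{h}\|$, a rare event. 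The negative contribution is then bounded by (the empirical fraction of such indices, controlled uniformly over $\bm{h}$ by Lemma~6 of \cite{CheCan15}, of order $\frac{1}{9800}\bigl(\frac{\|\bm{h}\|}{\alpha_z^{\lb}\|\bm{z}\|}\bigr)^4$ plus an $\epsilon$-term) times (a max-term bound of order $\|\bm{z}\|^2$ coming from the truncation). This yields a bound \emph{quadratic} in $\|\bm{h}\|/\|\bm{z}\|$, and the third term in the neighborhood hypothesis is tuned exactly to this quadratic decay and the factor $1/9800$, which is how the constant $0.01$ arises. Note also that the paper's route needs no $\epsilon$-net over triples for this lemma (the uniformity comes packaged in the cited Lemma~6), so the Lipschitz-surrogate machinery you carry over from Lemma~\ref{lem:1} is not the right tool here. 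Your opening reduction $\frac{\bm{a}_i^T\bm{x}}{\bm{a}_i^T\bm{z}}=1-\frac{\bm{a}_i^T\bm{h}}{\bm{a}_i^T\bm{z}}$ and the observation that the ``$1$'' term is harmless are correct; everything after that needs to be replaced by the sign/rare-event decomposition.
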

\begin{proof}
We first note that
\begin{IEEEeqnarray*}{l}
\frac{1}{m}\sum_{i=1}^m |\bm{a}_i^T\bm{h}|^2 \frac{\bm{a}_i^T\bm{x}}{\bm{a}_i^T\bm{z}}\mathbbm{1}_{\mathcal{E}^{i}_{t}}\\
\quad\quad\quad\quad \geq \frac{1}{m}\sum_{i=1}^m |\bm{a}_i^T\bm{h}|^2 \frac{\bm{a}_i^T\bm{x}}{\bm{a}_i^T\bm{z}}\mathbbm{1}_{\mathcal{E}^{i}_{t}\cap \mathcal{D}^i_t}\\
\quad\quad\quad\quad =  -\left(\frac{1}{m}\sum_{i=1}^m |\bm{a}_i^T\bm{h}|^2 \left|\frac{\bm{a}_i^T\bm{x}}{\bm{a}_i^T\bm{z}}\right|\mathbbm{1}_{\mathcal{E}^{i}_{t}\cap \mathcal{D}^i_t}\right),
\end{IEEEeqnarray*}
where 
$$\mathcal{D}^i_t = \left\{\frac{\bm{a}_i^T\bm{x}}{\bm{a}_i^T\bm{z}} \leq 0\right\}.$$
Also note that $\mathcal{D}^i_t \Leftrightarrow \widetilde{\mathcal{D}}^i_t,$
where 
$$\widetilde{\mathcal{D}}^i_t = \left\{|\bm{a}_i^T\bm{h}| \geq |\bm{a}_i^T\bm{z}|\right\} \cap \left\{|\bm{a}_i^T\bm{h}| \geq |\bm{a}_i^T\bm{x}|\right\}. $$

So, we now proceed to obtain an upper bound on $\frac{1}{m}\sum_{i=1}^m |\bm{a}_i^T\bm{h}|^2 \left|\frac{\bm{a}_i^T\bm{x}}{\bm{a}_i^T\bm{z}}\right|\mathbbm{1}_{\mathcal{E}^{i}_{t}\cap \widetilde{\mathcal{D}}^i_t}$. We have
\begin{IEEEeqnarray*}{l}
\frac{1}{m}\sum_{i=1}^m |\bm{a}_i^T\bm{h}|^2 \left|\frac{\bm{a}_i^T\bm{x}}{\bm{a}_i^T\bm{z}}\right|\mathbbm{1}_{\mathcal{E}^{i}_{t}\cap \widetilde{\mathcal{D}}^i_t}\\
\quad\quad\quad\quad\leq \frac{1}{m}\sum_{i=1}^m \frac{|\bm{a}_i^T\bm{h}|^3}{|\bm{a}_i^T\bm{z}|} \mathbbm{1}_{\mathcal{E}^{i}_{t}\cap \widetilde{\mathcal{D}}^i_t}\\
\quad\quad\quad\quad\leq \frac{1}{m}\sum_{i=1}^m \mathbbm{1}_{\mathcal{E}^{i}_{t}\cap \widetilde{\mathcal{D}}^i_t}\max_{i:\mathcal{E}^{i}_{t}\cap \widetilde{\mathcal{D}}^i_t}\left\{\frac{|\bm{a}_i^T\bm{h}|^3}{|\bm{a}_i^T\bm{z}|}\right\}.
\end{IEEEeqnarray*}

Now, we can use Lemma 6 from \cite{CheCan15} to upper bound $\sum_{i=1}^m \mathbbm{1}_{\mathcal{E}^{i}_{t}\cap \widetilde{\mathcal{D}}^i_t}$ with high probability, as long as $m$ is sufficiently large. Assuming that $\gamma$ is a real number that is at least 2 and at most $\frac{\alpha_z^{\lb}\|\bm{z}\|}{\|\bm{h}\|}$, this is done as follows:
\begin{IEEEeqnarray*}{rCl}
\frac{1}{m}\sum_{i=1}^m \mathbbm{1}_{\mathcal{E}^{i}_{t}\cap \widetilde{\mathcal{D}}^i_t} & \leq & \frac{1}{m}\sum_{i=1}^m \mathbbm{1}_{\{|\bm{a}_i^T\bm{h}|\geq \gamma\|\bm{h}\|\}}\\
& \leq & \frac{1}{0.49\gamma}\exp(-0.485\gamma^2) + \frac{\epsilon}{\gamma^2}\\
& \leq & \frac{1}{9800}\left(\frac{\|\bm{h}\|}{\alpha_z^{\lb}\|\bm{z}\|}\right)^4 + \frac{\epsilon}{(\alpha_z^{\lb})^2}\left(\frac{\|\bm{h}\|}{\|\bm{z}\|}\right)^2,
\end{IEEEeqnarray*}
which holds simultaneously for all $\bm{h}$ with probability $1 - C\exp(-c_0\epsilon^2 m)$ if $m\geq c_1n\epsilon^{-2}\log(1/\epsilon)$, where the last step also requires $\frac{\|\bm{h}\|}{\|\bm{z}\|}\leq \frac{\alpha_z^{\lb}}{6}$.

Since $|\bm{a}_i^T\bm{h}| = |\bm{a}_i^T\bm{z} - \bm{a}_i^T\bm{x}| \leq |\bm{a}_i^T\bm{z}| + |\bm{a}_i^T\bm{x}|$ and by the definition of $\mathcal{E}^{i}_{t}$, we have
\begin{IEEEeqnarray*}{rCl}
\max\limits_{i:\mathcal{E}^{i}_{t}\cap \widetilde{\mathcal{D}}^i_t}\left\{\frac{|\bm{a}_i^T\bm{h}|^3}{|\bm{a}_i^T\bm{z}|}\right\} & \leq & \frac{(\alpha_z^{\ub} + 1.01\alpha_x (1 + \delta_{\init}))^3}{\alpha_z^{\lb}}\|\bm{z}\|^2
\end{IEEEeqnarray*}
if we assume $\|\bm{x}\|\leq (1+\delta_{\init})\|\bm{z}\|$, which is true if $\frac{\|\bm{h}\|}{\|\bm{z}\|}\leq \delta_{\init}$. 

As a result of these two upper bounds, we get 
\begin{IEEEeqnarray*}{l}
\frac{1}{m}\sum_{i=1}^m \mathbbm{1}_{\mathcal{E}^{i}_{t}\cap \widetilde{\mathcal{D}}^i_t}\max_{i:\mathcal{E}^{i}_{t}\cap \widetilde{\mathcal{D}}^i_t}\left\{\frac{|\bm{a}_i^T\bm{h}|^3}{|\bm{a}_i^T\bm{z}|}\right\}\\
\quad \leq \left\{ \frac{(\alpha_z^{\ub} + 1.01\alpha_x (1 + \delta_{\init}))^3}{9800(\alpha_z^{\lb})^5}\frac{\|\bm{h}\|^2}{\|\bm{z}\|^2} + \frac{\epsilon(\alpha_z^{\ub} + 1.01\alpha_x (1 + \delta_{\init}))^3}{(\alpha_z^{\lb})^3} \right\}\|\bm{h}\|^2.
\end{IEEEeqnarray*}
Now, by choosing $\epsilon$ appropriately, and if $$\frac{\|\bm{h}\|}{\|\bm{z}\|}\leq \frac{\sqrt{98(\alpha_z^{\lb})^5}}{\sqrt{(\alpha_z^{\ub} + 1.01\alpha_x (1 + \delta_{\init}))^3}},$$
we get the claim of the lemma.
\end{proof}

\begin{lemma}\label{lem:3}
For all non-zero vectors $\bm{z}\in\mathbb{R}^n$ and $\bm{h}:=\bm{z} - \bm{x}$ with $\|\bm{h}\|\leq \delta_{\emph{init}}\|\bm{z}\|$, we have if $m\geq c_1n\delta^{-2}$,
$$\frac{1}{m}\sum_{i=1}^m \|\bm{a}_i\|^2 |\bm{a}_i^T\bm{h}|^2 \left(1 + \frac{\bm{a}_i^T\bm{x}}{\bm{a}_i^T\bm{z}}\right)^2\mathbbm{1}_{\mathcal{E}^{i}_{t}} \leq c_4(1 + \delta)n\|\bm{h}\|^2,$$
with probability $1 - Cm\exp(-c_0n\delta^2)$.  
\end{lemma}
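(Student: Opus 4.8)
The plan is to exploit the two truncation events to collapse the otherwise heavy-tailed summand into a well-behaved quadratic form in $\bm{h}$, and then bound that form uniformly in $\bm{h}$ via the operator-norm concentration \eqref{eq:ver539}. The key observation is that on $\mathcal{E}^{i}_{t} = \mathcal{E}^{i}_{1,t}\cap\mathcal{E}^{i}_{3}$ the unbounded ratio $\bm{a}_i^T\bm{x}/\bm{a}_i^T\bm{z}$ is controlled by a universal constant. Indeed, $\mathcal{E}^{i}_{1,t}$ forces $|\bm{a}_i^T\bm{z}|\geq \alpha_z^{\lb}\|\bm{z}\|$, bounding the denominator away from zero, while $\mathcal{E}^{i}_{3}$ gives $|\bm{a}_i^T\bm{x}|^2 = y_i \leq \alpha_x^2\cdot\frac{1}{m}\sum_j y_j \leq 1.01\,\alpha_x^2\|\bm{x}\|^2$ (using \eqref{eq:ver539} to bound $\frac{1}{m}\sum_j y_j = \bm{x}^T(\frac{1}{m}\sum_j\bm{a}_j\bm{a}_j^T)\bm{x}$, as in Lemma~\ref{lem:1}). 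Combined with $\|\bm{x}\|\leq(1+\delta_{\init})\|\bm{z}\|$, which holds since $\|\bm{h}\|\leq\delta_{\init}\|\bm{z}\|$, this yields on $\mathcal{E}^{i}_{t}$
$$\left(1+\frac{\bm{a}_i^T\bm{x}}{\bm{a}_i^T\bm{z}}\right)^2 \leq \left(1+\frac{1.01\,\alpha_x(1+\delta_{\init})}{\alpha_z^{\lb}}\right)^2 =: K,$$
a universal constant.

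Using this bound and then dropping the remaining indicator (all factors being nonnegative), I would reduce the left-hand side to $\frac{K}{m}\sum_{i=1}^m \|\bm{a}_i\|^2|\bm{a}_i^T\bm{h}|^2$, a quantity no longer depending on $\bm{z}$ or $\bm{x}$. Next I would strip off the $\|\bm{a}_i\|^2$ factor by a crude uniform bound: since each $\|\bm{a}_i\|^2$ is $\chi^2_n$, the event $\|\bm{a}_i\|^2\leq(1+\delta)n$ fails with probability $\exp(-\Omega(n\delta^2))$, so a union bound over $i=1,\dots,m$ gives $\max_i\|\bm{a}_i\|^2\leq(1+\delta)n$ with probability $1-Cm\exp(-c_0n\delta^2)$; this is precisely the source of the $m$ factor in the stated probability. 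On this event the sum is at most $(1+\delta)n\cdot\frac{1}{m}\sum_{i=1}^m|\bm{a}_i^T\bm{h}|^2$. Finally, the residual sum is the quadratic form $\bm{h}^T(\frac{1}{m}\sum_i\bm{a}_i\bm{a}_i^T)\bm{h}$, which by \eqref{eq:ver539} (Theorem 5.39 in \cite{Ver10}) is at most $(1+\delta)\|\bm{h}\|^2$ for all $\bm{h}$ \emph{simultaneously} with probability $1-C\exp(-c_1m\delta^2)$ when $m\geq c_0n\delta^{-2}$. Combining the three estimates gives $K(1+\delta)^2 n\|\bm{h}\|^2$, and bounding $K(1+\delta)^2 \leq c_4(1+\delta)$ by taking $c_4\geq 2K$ (valid for $\delta\leq1$) yields the claim; the dominant failure probability is the union-bound term $1-Cm\exp(-c_0n\delta^2)$.

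The main obstacle is conceptual rather than computational. The summand $\|\bm{a}_i\|^2|\bm{a}_i^T\bm{h}|^2$ is a product of two sub-exponential quantities and is hence too heavy-tailed to control by a direct sub-exponential concentration plus $\epsilon$-net argument, the route taken in Lemma~\ref{lem:1}. The resolution is twofold and order-sensitive: one must \emph{first} use $\mathcal{E}^{i}_{1,t}$ (bounding $\bm{a}_i^T\bm{z}$ away from zero) together with $\mathcal{E}^{i}_{3}$ to tame the ratio into the constant $K$ — without truncation the factor $(1+\bm{a}_i^T\bm{x}/\bm{a}_i^T\bm{z})^2$ is genuinely unbounded — and only \emph{then} linearize by peeling off $\|\bm{a}_i\|^2$ through the uniform bound. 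This peeling trades a heavy-tailed sum for a clean quadratic form at the price of the extra $m$ in the probability, but since $m=O(n)$ this loss is immaterial. Note that, unlike Lemma~\ref{lem:1}, no $\epsilon$-net over $\bm{h}$ is required here, since the final quadratic form is governed uniformly in $\bm{h}$ by a single operator-norm estimate.
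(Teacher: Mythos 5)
Your proof is correct and follows essentially the same route as the paper's: use $\mathcal{E}^{i}_{1,t}$ and $\mathcal{E}^{i}_{3}$ (via \eqref{eq:ver539} and $\|\bm{x}\|\leq(1+\delta_{\init})\|\bm{z}\|$) to bound the ratio factor by the constant $\left(1+\frac{1.01\alpha_x(1+\delta_{\init})}{\alpha_z^{\lb}}\right)^2$, peel off $\|\bm{a}_i\|^2$ by a union-bound norm estimate (the source of the $m$ in the failure probability), and control the remaining quadratic form $\frac{1}{m}\sum_i|\bm{a}_i^T\bm{h}|^2$ uniformly via the operator-norm bound. The only differences are cosmetic: the paper uses the cruder bound $\|\bm{a}_i\|^2\leq 6n$ (yielding $c_4 = 6\left(1+\frac{1.01\alpha_x(1+\delta_{\init})}{\alpha_z^{\lb}}\right)^2$) where you use $(1+\delta)n$ and absorb the extra factor into $c_4$.
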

\begin{proof}
We have that $\|\bm{a}_i\| \leq \sqrt{6n}$ for all $i$ with probability $1 - m\exp(-1.5n)$. The following chain of inequalities uses this, and the definition of $\mathcal{E}^i_t.$
\begin{IEEEeqnarray*}{l}
\frac{1}{m}\sum_{i=1}^m \|\bm{a}_i\|^2 |\bm{a}_i^T\bm{h}|^2 \left(1 + \frac{\bm{a}_i^T\bm{x}}{\bm{a}_i^T\bm{z}}\right)^2\mathbbm{1}_{\mathcal{E}^{i}_{t}}\\
\quad \leq \frac{1}{m}\sum_{i=1}^m \|\bm{a}_i\|^2 |\bm{a}_i^T\bm{h}|^2 \left(1 + \frac{1.01\alpha_x(1+ \delta_{\init})}{\alpha_z^{\lb}}\right)^2\mathbbm{1}_{\mathcal{E}^{i}_{t}}\\
\quad \leq 6n\left(1 + 1.01\frac{\alpha_x(1+ \delta_{\init})}{\alpha_z^{\lb}}\right)^2\frac{1}{m} \sum_{i=1}^m|\bm{a}_i^T\bm{h}|^2 \\
\quad\leq c_4n(1+\delta)\|\bm{h}\|^2,
\end{IEEEeqnarray*}
where the last inequality follows since by Theorem 5.39 in \cite{Ver10}, $$\left\|\frac{1}{m}\sum_{i=1}^m \bm{a}_i\bm{a}_i^T - I\right\| \leq \delta,$$
with probability $1 - C\exp(-c_1m\delta^2)$ if $m\geq c_0n\delta^{-2}$. We have also used the fact that $y_i \leq \alpha^2_x (\frac{1}{m}\sum_{i=1}^my_i)$ implies $$ |\bm{a}_i^T\bm{x}| \leq 1.01\alpha_x\|\bm{x}\|\leq 1.01\alpha_x(1+\delta_{\init})\|\bm{z}\|.$$ It can be seen that $c_4$ stands for the quantity $6\left(1 + \frac{1.01\alpha_x(1+ \delta_{\init})}{\alpha_z^{\lb}}\right)^2$.
\end{proof}
\pagebreak
\subsection{Proof of Theorem~\ref{thm:2} }

In the noisy case, the measurements are generated according to $y_i = |\bm{a}_i^T\bm{x}|^2 + \eta_i$. Recall that we are using $\mathcal{E}^{i}_t$ to denote $\mathcal{E}^{i}_{1,t} \cap \mathcal{E}^{i}_{3}$.

Proceeding in a similar manner as Proposition~\ref{prop:1}, and recalling that we are using $\bm{h} = \bm{z}^{(t)} - \bm{x}$, we get that
\begin{IEEEeqnarray}{l}
\mathbb{E}_{i_{t}}\left[\dist^2(\bm{z}^{(t+1)},\bm{x})\right]\nonumber\\ 
\quad =  \|\bm{h}\|^2 - \frac{4\mu}{m} \sum_{i=1}^m \bm{a}_i^T\bm{h}\left(\frac{|\bm{a}_i^T\bm{z}^{(t)}|^2 - |\bm{a}_i^T\bm{x}|^2 - \eta_i}{\bm{a}_i^T\bm{z}^{(t)}}\right)\mathbbm{1}_{\mathcal{E}^{i}_{t}}\nonumber\\
\quad\quad +\> \frac{4\mu^2}{m}\sum_{i=1}^m \|\bm{a}_i\|^2 \left(\frac{|\bm{a}_i^T\bm{z}^{(t)}|^2 - |\bm{a}_i^T\bm{x}|^2 - \eta_i}{\bm{a}_i^T\bm{z}^{(t)}}\right)^2 \mathbbm{1}_{\mathcal{E}^{i}_{t}}\nonumber\\
\quad =  \|\bm{h}\|^2 - \frac{4\mu}{m} \sum_{i=1}^m |\bm{a}_i^T\bm{h}|^2\left(1 + \frac{\bm{a}_i^T\bm{x}}{\bm{a}_i^T\bm{z}^{(t)}}\right)\mathbbm{1}_{\mathcal{E}^{i}_{t}} + \frac{4\mu}{m}\sum_{i=1}^m \frac{(\bm{a}_i^T\bm{h})\eta_i}{\bm{a}_i^T\bm{z}^{(t)}}\mathbbm{1}_{\mathcal{E}^{i}_{t}}\nonumber\\
\quad \quad +\> \frac{4\mu^2}{m}\sum_{i=1}^m \|\bm{a}_i\|^2|\bm{a}_i^T\bm{h}|^2\left(1 + \frac{\bm{a}_i^T\bm{x}}{\bm{a}_i^T\bm{z}^{(t)}}\right)^2 \mathbbm{1}_{\mathcal{E}^{i}_{t}} + \frac{4\mu^2}{m}\sum_{i=1}^m \frac{\eta_i^2\|\bm{a}_i\|^2}{|\bm{a}_i^T\bm{z}^{(t)}|^2}\mathbbm{1}_{\mathcal{E}^{i}_{t}} \nonumber\\
\quad \quad -\> \frac{8\mu^2}{m}\sum_{i=1}^m  \frac{(\bm{a}_i^T\bm{h})\eta_i}{\bm{a}_i^T\bm{z}^{(t)}}\left(1 + \frac{\bm{a}_i^T\bm{x}}{\bm{a}_i^T\bm{z}^{(t)}}  \right)\|\bm{a}_i\|^2\mathbbm{1}_{\mathcal{E}^{i}_{t}}\label{eq:noisy_rec}
\end{IEEEeqnarray}
To shorten the expressions, we will resort to denoting $\bm{z}^{(t)}$ by $\bm{z}$ in the remainder of this proof. The follow lemma is the noisy-case counterpart of Lemma~\ref{lem:1}.

\begin{lemma}\label{lem:4}
For any $\bm{h}$, $\bm{z}$ and $\bm{x}$, for any $\epsilon,\delta > 0$, there exist constants $C, c_0, c_1>0$ such that if $m > c_1n\delta^{-2}\log (1/\delta)$, then
$$\frac{1}{m} \sum_{i=1}^m |\bm{a}_i^T\bm{h}|^2\mathbbm{1}_{\mathcal{E}^{i}_{t}}\geq (1 - \zeta_1 - \zeta_2 - \zeta_3 - 2\delta )\|\bm{h}\|^2,$$
with probability $1 - C\exp(-c_0m\delta^2)$, for appropriately redefined $\zeta_1, \zeta_2, \zeta_3$. 
\end{lemma}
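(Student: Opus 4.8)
The plan is to follow the proof of Lemma~\ref{lem:1} almost verbatim; the \emph{only} place where the noise enters is through the truncation event $\mathcal{E}^i_3$, which in the noisy model depends on $y_i = |\bm{a}_i^T\bm{x}|^2 + \eta_i$ rather than on $|\bm{a}_i^T\bm{x}|^2$ alone. The event $\mathcal{E}^i_{1,t}$ depends only on $|\bm{a}_i^T\bm{z}|/\|\bm{z}\|$ and is therefore untouched by the noise. Consequently the entire task reduces to re-establishing the deterministic lower bound $\mathbbm{1}_{\mathcal{E}^i_3}\geq \mathbbm{1}_{\{|\bm{a}_i^T\bm{x}|/\|\bm{x}\|\leq \widetilde{\alpha}_x\}}$ for a threshold $\widetilde{\alpha}_x$ that is now slightly smaller than the noiseless $0.99\alpha_x$, the shrinkage being controlled by $\epsilon_\eta$. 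With this bound in hand, the Lipschitz-proxy construction, the expectation computation, the Bernstein concentration step and the $\epsilon$-net argument all carry over unchanged, with $\widetilde{\alpha}_x$ replacing the old threshold and the constants $\zeta_1,\zeta_2,\zeta_3$ recomputed at this new threshold.

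First I would control the empirical mean in the denominator of $\mathcal{E}^i_3$. By Theorem~5.39 of \cite{Ver10} we have $\frac{1}{m}\sum_j |\bm{a}_j^T\bm{x}|^2 \geq (1-\delta)\|\bm{x}\|^2$ with probability $1 - C\exp(-c_1 m\delta^2)$, and since $\left|\frac{1}{m}\sum_j \eta_j\right| \leq \|\bm{\eta}\|_\infty \leq \epsilon_\eta\|\bm{x}\|^2$, this gives $\frac{1}{m}\sum_j y_j \geq (1-\delta-\epsilon_\eta)\|\bm{x}\|^2$. At the level of a single sample, $y_i = |\bm{a}_i^T\bm{x}|^2+\eta_i \leq |\bm{a}_i^T\bm{x}|^2 + \epsilon_\eta\|\bm{x}\|^2$. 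Combining the two, a sufficient condition for $\mathcal{E}^i_3$ to hold is
\begin{equation*}
\frac{|\bm{a}_i^T\bm{x}|^2}{\|\bm{x}\|^2} \leq \alpha_x^2(1-\delta-\epsilon_\eta) - \epsilon_\eta =: \widetilde{\alpha}_x^2,
\end{equation*}
which is exactly the claimed indicator bound. Since $\epsilon_\eta$ and $\delta$ are small, $\widetilde{\alpha}_x$ differs from $0.99\alpha_x$ by only $O(\epsilon_\eta)$, so the truncated-Gaussian moment $\zeta_2$ (and, through the $\gamma$-cutoff, $\zeta_3$) changes only by an amount that vanishes with $\epsilon_\eta$; this is precisely the sense in which $\zeta_1,\zeta_2,\zeta_3$ are ``appropriately redefined.''

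With the indicator bound established, I would reuse the three $O(1)$-Lipschitz proxies $\chi_z,\chi_x,\chi_h$ from Lemma~\ref{lem:1} (with $\widetilde{\alpha}_x$ inserted into the definition of $\chi_x$), so that $\mathbbm{1}_{\mathcal{E}^i_{1,t}}\mathbbm{1}_{\mathcal{E}^i_3} \geq \chi_z(\bm{a}_i^T\bm{z})\chi_x(\bm{a}_i^T\bm{x})\chi_h(\bm{a}_i^T\bm{h})$. Lower bounding the expectation of a single term by $(1-\zeta_1-\zeta_2-\zeta_3)\|\bm{h}\|^2$ via the same Gaussian computation as in Lemma~\ref{lem:1}, then applying Proposition~5.16 of \cite{Ver10} to the sub-exponential summands and a union bound over an $\epsilon$-net of cardinality $(1+2/\epsilon)^{3n}$ --- together with the Lipschitz bound and Lemmas~1--2 of \cite{CheCan15} to transfer from net points to all unit vectors --- yields the stated high-probability bound with a loss of $2\delta$.

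The main obstacle, such as it is, lies entirely in the first step: one must verify that the two-sided noise perturbation of $y_i$ and of $\frac{1}{m}\sum_j y_j$ can be absorbed \emph{uniformly in $i$} while keeping the margin $1-\zeta_1-\zeta_2-\zeta_3$ bounded away from zero. This is where the hypothesis $\|\bm{\eta}\|_\infty\leq \epsilon_\eta\|\bm{x}\|^2$ is essential: a bound on $\|\bm{\eta}\|$ alone would not control each individual $y_i$, and it is this per-coordinate control that forces the small-noise assumption in Theorem~\ref{thm:2}. By continuity of the truncated second moments of a standard Gaussian in the truncation threshold, the resulting change in $\zeta_2,\zeta_3$ is $O(\epsilon_\eta)$ and hence harmless.
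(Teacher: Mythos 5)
Your proposal is correct and follows essentially the same route as the paper: both reduce $\mathbbm{1}_{\mathcal{E}^i_3}$ to a deterministic indicator $\mathbbm{1}_{\{|\bm{a}_i^T\bm{x}|\leq \widetilde{\alpha}_x\|\bm{x}\|\}}$ by combining the concentration bound on $\frac{1}{m}\sum_i |\bm{a}_i^T\bm{x}|^2$ with the per-coordinate noise bound $\|\bm{\eta}\|_\infty\leq\epsilon_\eta\|\bm{x}\|^2$ (your threshold $\alpha_x^2(1-\delta-\epsilon_\eta)-\epsilon_\eta$ coincides with the paper's $\alpha_x^2(0.99-\epsilon_\eta)-\epsilon_\eta$ at $\delta=0.01$), and then invoke the Lemma~\ref{lem:1} machinery verbatim with redefined $\zeta_1,\zeta_2,\zeta_3$. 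Your added remarks on the $O(\epsilon_\eta)$ continuity of the truncated moments and the necessity of the $\ell_\infty$ noise assumption are consistent with, and slightly more explicit than, the paper's treatment.
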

\begin{proof}
%Let $\mathcal{G}$ be the set of those $(1-\epsilon)m$ indices in the set $\{1,2,\dots, m\}$ that correspond to the smallest $(1-\epsilon)m$ values of $|\eta_i|$. Then by Markov's inequality, for all $i\in \mathcal{G}$, we have that \begin{equation}\label{eq:G_eta_i}|\eta_i| \leq \frac{\|\bm{\eta}\|}{\sqrt{m\epsilon}} \leq \frac{\|\bm{h}\|\|\bm{z}\|}{c_3\sqrt{\epsilon}}.\end{equation} 
We have for any $i$,
\begin{IEEEeqnarray*}{rCl}
\mathbbm{1}_{\mathcal{E}^{i}_{3}} = \mathbbm{1}_{\left\{ y_i \leq \alpha^2_x\left(\frac{1}{m}\sum_{i=1}^my_i\right)\right\}} & \geq & \mathbbm{1}_{\left\{ |\bm{a}_i^T\bm{x}|^2 + |\eta_i| \leq \alpha^2_x \left((1-0.01)\|\bm{x}\|^2 - \frac{1}{m}\|\bm{\eta}\|_1\right) \right\}}\\
& \geq & \mathbbm{1}_{\left\{ |\bm{a}_i^T\bm{x}|^2 + \|\bm{\eta}\|_{\infty} \leq \alpha^2_x \left(0.99\|\bm{x}\|^2 - \|\bm{\eta}\|_{\infty}\right) \right\}}\\
& \geq & \mathbbm{1}_{\left\{ |\bm{a}_i^T\bm{x}|^2 + \epsilon_{\eta}\|\bm{x}\|^2 \leq \alpha^2_x \left(0.99\|\bm{x}\|^2 - \epsilon_{\eta}\|\bm{x}\|^2\right) \right\}}\\
& = & \mathbbm{1}_{\left\{ |\bm{a}_i^T\bm{x}|^2 \leq \left(\alpha^2_x \left(0.99 - \epsilon_{\eta}\right) - \epsilon_{\eta}\right)\|\bm{x}\|^2 \right\}}
\end{IEEEeqnarray*}
where the inequalities follow by \eqref{eq:ver539} and the assumption $\|\bm{\eta}\|_{\infty}\leq \epsilon_{\eta}\|\bm{x}\|^2$. %, the condition of regime 1, and by \eqref{eq:G_eta_i}. Now, assuming $\|\bm{h}\|\leq \delta_\init \|\bm{z}\|$, we get that
%\begin{IEEEeqnarray*}{rCl}
%\mathbbm{1}_{\mathcal{E}^{i}_{3}} & \geq & \mathbbm{1}_{\left\{ |\bm{a}_i^T\bm{x}|^2  \leq \left(\alpha^2_x \left(0.99 - \frac{\delta_\init(1+\delta_\init)}{c_3}\right) - \frac{\delta_\init(1+\delta_\init)}{c_3\sqrt{\epsilon}}\right)\|\bm{x}\|^2  \right\}}
%\end{IEEEeqnarray*}
%Now, assuming that $\delta_\init$ is sufficiently small compared to $c_3$, 
Thus, we get that the RHS of the above expression is similar to $\mathbbm{1}_{ \left\{\frac{|\bm{a}_i^T\bm{x}|}{\|\bm{x}\| }\leq \widetilde{\alpha}_x \right\} }.$ Then, by applying similar arguments as in Lemma~\ref{lem:1}, with appropriately redefined $\zeta_1$, $\zeta_2$ and $\zeta_3$, we can conclude that the following holds with high probability,
\begin{IEEEeqnarray*}{rCl}
\frac{1}{m} \sum_{i=1}^m |\bm{a}_i^T\bm{h}|^2\mathbbm{1}_{\mathcal{E}^{i}_{t}} & \geq &  (1 - \zeta_1 - \zeta_2 - \zeta_3 - 2\delta) \|\bm{h}\|^2.
\end{IEEEeqnarray*}
\end{proof}

The following chain of inequalities, continuing from \eqref{eq:noisy_rec}, uses the fact that $\|\bm{a}_i\|^2\leq 6n$ holds with high probability, and $$\left|1 + \frac{\bm{a}_i^T\bm{x}}{\bm{a}_i^T\bm{z}^{(t)}}\right|\leq \left(1 + \frac{\sqrt{\alpha^2_x(1.01 + \epsilon_{\eta})  +\epsilon_{\eta}}(1 + \delta_{\init})}{\alpha_z^{\lb}}\right) = \widetilde{c}_4,$$ which also holds with high probability by the definition of $\mathcal{E}^i_t$, by $\|\bm{h}\|\leq \delta_{\init}\|\bm{z}\|$ and by the assumption on the noise $\|\bm{\eta}\|_{\infty}\leq \epsilon_{\eta}\|\bm{x}\|^2$. The Cauchy-Schwarz inequality is applied in terms involving $(\bm{a}_i^T\bm{h})\eta_i$. 
\begin{IEEEeqnarray}{l}
\mathbb{E}_{i_{t}}\left[\dist^2(\bm{z}^{(t+1)},\bm{x})\right] \nonumber\\
\quad \leq  \|\bm{h}\|^2 - 4\mu(1 - \zeta_1 - \zeta_2 - \zeta_3 - 2\delta )\|\bm{h}\|^2  -\frac{4\mu}{m}\sum_{i=1}^m |\bm{a}_i^T\bm{h}|^2\left(\frac{\bm{a}_i^T\bm{x}}{\bm{a}_i^T\bm{z}}\right)\mathbbm{1}_{\mathcal{E}^{i}_{t}}\nonumber \\
\quad\quad +\> \frac{4\mu}{\alpha_z^{\lb}} \frac{\|\bm{\eta}\|}{\sqrt{m}\|\bm{z}\|} \|\bm{h}\| + 4\mu^2 n c_4 (1+\delta) \|\bm{h}\|^2 + \frac{24\mu^2n}{(\alpha_z^\lb)^2} \frac{\|\bm{\eta}\|^2}{m\|\bm{z}\|^2}\nonumber\\ 
\quad\quad +\> \frac{48\mu^2n\widetilde{c}_4\sqrt{1+\delta}}{\alpha_z^\lb} \|\bm{h}\| \frac{\|\bm{\eta}\|}{\sqrt{m}\|\bm{z}\|}
\end{IEEEeqnarray}

The third term in the RHS can be shown to be no more than a small constant times $\|\bm{h}\|^2$ in a similar manner as in Lemma~\ref{lem:2}. Thus, we get by collecting all constants for brevity,
\begin{IEEEeqnarray}{rCl}
\mathbb{E}_{i_{t}}\left[\dist^2(\bm{z}^{(t+1)},\bm{x})\right] & \leq & \|\bm{h}\|^2 - 4\mu C\|\bm{h}\|^2 + \mu C' \frac{\|\bm{\eta}\|}{\sqrt{m}\|\bm{z}\|} \|\bm{h}\| + \mu^2 n C'' \|\bm{h}\|^2 \nonumber\\
&&\quad +\> \mu^2 n C''' \frac{\|\bm{\eta}\|^2}{m\|\bm{z}\|^2} + \mu^2 n C'''' \|\bm{h}\| \frac{\|\bm{\eta}\|}{\sqrt{m}\|\bm{z}\|}. \label{eq:noisy_simplified}
\end{IEEEeqnarray}

\subsubsection{Regime I}
Regime I is defined to be the condition $\frac{c_3\|\bm{\eta}\|}{\sqrt{m}\|\bm{z}\|} \leq \|\bm{h}\| \leq c_4\|\bm{x}\|$ where $c_3$ is some large constant. In this regime, it is possible to show that the distance to the solution decreases geometrically, as in the noiseless case.

Because $\frac{\|\bm{\eta}\|}{\sqrt{m}\|\bm{z}\|} \leq \frac{1}{c_3}\|\bm{h}\|$ in regime 1, we can substitute this in \eqref{eq:noisy_simplified} and choose an appropriate $\mu =\Theta \left(\frac{1}{n}\right)$, assuming $c_3$ is large enough, to arrive at
\begin{IEEEeqnarray}{rCl}
\mathbb{E}_{i_{t}}\left[\dist^2(\bm{z}^{(t+1)},\bm{x})\right] & \leq & \left(1 - \frac{\rho}{n}\right)\|\bm{h}\|^2. \nonumber
\end{IEEEeqnarray}

Thus, we have linear convergence in Regime 1.

\subsubsection{Regime II}
If $\bm{h}$ is not in Regime I, we have $\|\bm{h}\| \leq \frac{c_3\|\bm{\eta}\|}{\sqrt{m}\|\bm{z}\|}$. While the distance to the solution is not guaranteed to decrease in this regime, it can be shown that by taking one step of ITWF, the expected next iterate either stays in Regime II or goes to Regime I, but not beyond, thus resulting in a bounded error, as given by Theorem~\ref{thm:2}. We employ similar arguments as in \cite{CheCan15}.  

The expected norm of the step is
\begin{IEEEeqnarray*}{rCl}
\mathbb{E}_{i_t}\left[\left\|\mu \nabla\ell(y_{i_t},|\bm{a}_{i_t}^T\bm{z}^{(t)}|^2)\cdot \mathbbm{1}_{\mathcal{E}^{i_t}_{t}} \right\|\right] & = & \mathbb{E}_{i_t}\left[\left\|\mu \left(\frac{|\bm{a}_i^T\bm{z}|^2 - |\bm{a}_i^T\bm{x}|^2 - \eta_i}{\bm{a}_i^T\bm{z}}\right)\bm{a}_i\cdot \mathbbm{1}_{\mathcal{E}^{i_t}_{t}} \right\|\right]\\
& \lesssim & \mu \sqrt{n} \left(\|\bm{h}\| + \frac{\|\bm{\eta}\|}{\sqrt{m}\|\bm{z}\|}\right) \\
& \lesssim & \frac{1}{\sqrt{n}}\left(\|\bm{h}\| + \frac{\|\bm{\eta}\|}{\sqrt{m}\|\bm{z}\|}\right).
\end{IEEEeqnarray*} 

If $\|\bm{h}\| \leq \frac{c_3\|\bm{\eta}\|}{\sqrt{m}\|\bm{z}\|}$, we have that 
\begin{IEEEeqnarray*}{rCl}
\mathbb{E}_{i_t}[\dist^2(\bm{z}^{(t+1)},\bm{x})] & \lesssim & \|\bm{h}\|^2 + \frac{1}{n}\left(\|\bm{h}\|^2 + \frac{\|\bm{\eta}\|^2}{m\|\bm{z}\|^2}\right) \\
& \leq & C\frac{\|\bm{\eta}\|^2}{m\|\bm{z}\|^2}.
\end{IEEEeqnarray*}

Thus, if $C$ is small, we stay in Regime II, else we have that
$$C\frac{\|\bm{\eta}\|^2}{m\|\bm{z}\|^2} \leq C\frac{\|\bm{\eta}\|^2_{\infty}}{\|\bm{z}\|^2} \leq \widetilde{C}\|\bm{x}\|^2,$$
thus ending up in Regime I, assuming $\frac{\|\bm{\eta}\|_{\infty}}{\|\bm{x}\|^2}=\epsilon_{\eta}$ is sufficiently small.

This concludes the proof of Theorem~\ref{thm:2}.\qed

\end{document}